\documentclass[11pt]{article}

\usepackage{times}
\usepackage[english]{babel}
\usepackage[latin1]{inputenc}
\usepackage{lmodern}
\usepackage[T1]{fontenc}
\usepackage{amsmath,amssymb}
\usepackage{amsthm}
\usepackage{amsfonts}
\usepackage{hyperref}

\usepackage{tikz} 


\DeclareMathOperator{\acc}{Acc}

\DeclareMathOperator{\Tr}{Tr}
\DeclareMathOperator{\Nil}{Nil}
\DeclareMathOperator{\nil}{nil}

\DeclareMathOperator{\LC}{ ColList}

\DeclareMathOperator{\nt}{Nil}

\DeclareMathOperator{\clx}{collist}


\newcommand{\vgt}[1]{``#1''}
\newcommand{\NN}{\mathbb{N}}

\newcommand{\ap}[1]{\langle #1 \rangle}
\newcommand{\bp}[1]{\left\lbrace #1 \right\rbrace}

\newcommand{\succc}{ \succ_{\mbox{\tiny col}} }
\newcommand{\succT}{\succ_1}
\newcommand{\conc} { {*} }
\newcommand{\setL} { {\mathcal L} }
\newcommand{\ktr}{k\mbox{-}\!\Tr}
\newcommand{\stati}{States}
\newcommand{\hTk}[2]{h_k(#1,#2)}
\newcommand{\hTone}[2]{h_1(#1,#2)}
\newcommand{\htree}[1]{h_k(#1)}

\theoremstyle{plain}
	\newtheorem{theorem}{Theorem}[section]
	\newtheorem{proposition}[theorem]{Proposition}
	\newtheorem{lemma}[theorem]{Lemma}

\theoremstyle{definition}
	\newtheorem{definition}[theorem]{Definition}

\theoremstyle{remark}

\title{Proving termination with transition invariants of height $\omega$}
\author{Stefano Berardi, Paulo Oliva, Silvia Steila}

\begin{document}
\maketitle

\section{Introduction}
The Termination Theorem by Podelski and Rybalchenko \cite{Podelski} states that the reduction relations which are terminating from any initial state are exactly the reduction relations whose transitive closure, restricted to the accessible states, is included in some finite union of well-founded relations. 
An alternative statement of the theorem is that terminating reduction relations are precisely those having a \vgt{disjunctively well-founded transition invariant}.
From this result the same authors and Byron Cook designed an algorithm checking a sufficient condition for termination for a {\sf while-if} program. The algorithm looks for a disjunctively well-founded transition invariant, made of well-founded relations of height $\omega$, and if it finds it, it deduces the termination for the {\sf while-if} program using the Termination Theorem.

This raises an interesting question: What is the status of reduction relations having a disjunctively well-founded transition invariant where each relation has height $\omega$? An answer to this question can lead to a characterization of the set of while-if programs which the termination algorithm can prove to be terminating. The goal of this work is to prove that they are exactly the set of reduction relations having height $\leq \omega^n$ for some $n<\omega$. Besides, if all the relations in the transition invariant are primitive recursive and the reduction relation is the graph of the restriction to some primitive recursive set of a primitive recursive map, then a final state is computable by some primitive recursive map in the initial state.

As a corollary we derive that the set of functions, having at least one implementation in Podelski-Rybalchenko {\sf while-if} language with a disjunctively well-founded transition invariant where each relation has height $\omega$, is exactly the set of primitive recursive functions.

We conjecture that the same result holds for the Terminator Algorithm based on the Termination Theorem: A function has at least one implementation in Podelski-Rybalchenko language which the Terminator Algorithm may catch terminating if and only if the function is primitive recursive. One of the authors is working on a proof of it.

The Termination Theorem is proved in classical logic using Ramsey's Theorem. In order to intuitionistically prove the Termination Theorem we introduced a kind of contrapositive of Ramsey Theorem, the $H$-closure Theorem \cite{Hclosure}. We say that a sequence $s$ is $R$-homogeneous if $s \in H(R)$, where $H(R)$ is defined as follows. 

\begin{center}
{\em Let $R$ be a binary relation on $I$. $H(R)$ is the set of the $R$-decreasing transitive finite sequences on $I$:}
 		\[
 			 \ap{x_1,\dots, x_n} \in H(R) \iff \forall i, j \in [1,n]. i<j \implies x_j R x_i.
 	 	\]
\end{center}

$R$ is $H$-well-founded if $H(R)$ is well-founded by one-step extension. The $H$-closure Theorem says that if $R_1, \dots, R_k$ are $H$-well-founded then $(R_1 \cup \dots \cup R_k)$ is also $H$-well-founded.

$H$-closure, as we said, intuitionistically derives the Termination Theorem. In order to characterize the Termination Theorem in the case of height $\omega$ relations, we first strengthen H-closure as follows. If each $R_i$ has ordinal height less or equal than $\alpha_i$, then the $(R_1 \cup \dots \cup R_k)$-homogeneous sequences have ordinal height less or equal than $2^{\alpha_1 \oplus \dots  \oplus \alpha_k}$, where $\oplus$ is the natural sum of ordinals, defined as the smallest binary function w.r.t. the pointwise ordering which is increasing in both arguments \cite{Carruth}. The proof uses a simulation of the ordering of $H(R_1 \cup \dots \cup R_k)$ in the inclusion ordering over the set of $k$-branching trees, whose branches are decreasing sequences in $R_1 \oplus \dots \oplus R_k$.

Our second step in the characterization of the Termination Theorem is the following. We prove that given a transition relation which is the graph of a partial recursive map restricted to a primitive recursive domain, and given a disjunctively well-founded transition invariant whose relations are primitive recursive and have height $\omega$, we may compute the number of its step and the final state by primitive recursive functions. 

The two proofs are developed in several steps. The first one is to evaluate the ordinal height w.r.t. the reverse-inclusion ordering for the $k$-branching trees of decreasing sequences which belong to an ordinal $\alpha$. This height is $2^\alpha$ if $\alpha$ is a limit ordinal, but for $\alpha$ successor ordinal the expression is more complex. Then, by considering $k$ relations of height $\omega$, we will assign a decreasing labelling for the $k$-ary  trees we use to prove the termination theorem. After that we define an embedding from $H(R_1\cup \dots \cup R_k)$ to the $k$-ary trees. The last step is finding a primitive recursive bound for the Termination Theorem in the case of height $\omega$ relations.

\section{The ordinal height of $k$-ary trees}
We recall some well-known facts about the natural sum. $\alpha \oplus \beta$ is defined as 
\[
	\bigvee\bp{\alpha' \oplus \beta +1, \alpha \oplus \beta'+1 : \alpha' < \alpha, \beta' < \beta}.
\]
By Cantor Normal Form Theorem, each pair of ordinals $\alpha$, $\beta$ may be written as 
\[
\begin{split}
	\alpha&= \omega^{\gamma_1}\cdot n_1 + \dots + \omega^{\gamma_p}\cdot n_p\\
	\beta&= \omega^{\gamma_1}\cdot m_1 + \dots + \omega^{\gamma_p}\cdot m_p
\end{split}
\]
for some $\gamma_1 > \gamma_2> \dots > \gamma_p$ and some $n_1,\ \dots,\ n_p,\ m_1,\ \dots,\ m_p <\omega$. By principal induction over $\alpha$ and secondary induction over $\beta$ we may prove that 
\[
\alpha \oplus \beta=\omega^{\gamma_1}\cdot (n_1+m_1) \oplus \dots \oplus \omega^{\gamma_p}\cdot (n_p+m_p).
\]
As a corollary we deduce that natural sum is commutative and associative. 

\begin{definition}($k$-ary trees on $\alpha$)
\begin{itemize}
\item Let $\alpha$ be any ordinal, we define $\ktr(\alpha)$ as the set of the finite $k$-branching trees labelled with decreasing labels in $[0,\alpha)$. We use  $\nt$ to denote the $k$-branching empty tree.
\item Given $T, \ U \in \ktr(\alpha)$, we define $U \succT T$ as: $U$ is obtained from $T$ by adding one node.
\end{itemize}
\end{definition}

We define a map $\hTk{\cdot}{\alpha}$ computing the ordinal height of the tree $T$ in $\ktr(\alpha)$ w.r.t. $\succT$. $\nt$ has the highest ordinal height w.r.t. $\succT$. Hence $\hTk{\Nil}{\alpha}$ computes the ordinal height of the entire set of such trees. For the results of this paper we only need to know the values of $\hTk{\cdot}{\alpha}$ for $\alpha < \omega^2$. For sake of completeness, however, we will include a study of $h_k(\cdot, \alpha)$ for all $\alpha$. 


\subsection{From $k$-branching trees to the ordinals}
 For any $T$ in $\ktr(\alpha)$ we define a map $\hTk{T}{\alpha}$, then we prove that it computes the ordinal height of $T$ in $\ktr(\alpha)$.
 
 \begin{definition}
 Let $\alpha$ be an ordinal and let $T \in \ktr(\alpha)$. Then we define
\[
	\hTk{T}{\alpha} = \bigoplus\limits_{\nt \mbox{ \scriptsize nodes}}\{\hTk{\nt}{\beta}  : \beta \mbox{ is the label of the father of the $\nt$ node}\}
\]
and 
\[
	\hTk{\nt}{\alpha} = \bigvee\bp{ \hTk{T}{\alpha}+ 1 : T \in \ktr((\alpha)), T \succT \nt}.
\]
where $\oplus$ is the natural sum of ordinals (also known as Hessenberg sum \cite{Carruth}).
\end{definition}

We have to prove that $\hTk{T}{\alpha}$ computes the ordinal height of $T$ in $\ktr(\alpha)$. First we observe that there is an equivalent but simpler description of $\hTk{\nt}{\cdot}$.

Given an ordinal $\alpha$ and a natural number $k$ we define the natural product \cite{Carruth}
\[
	\alpha*k = \alpha \oplus \dots \oplus \alpha
\] 
where there are $k$-many $\alpha$. With $\alpha\cdot k$, instead, we denote the standard product of ordinals.

\begin{lemma}\label{lemma: equiv}
$\hTk{\nt}{\cdot}$ is such that for all $\alpha$
\[
	\hTk{\nt}{\alpha} =  \bigvee\bp{\hTk{\nt}{\beta}*k +1 : \beta < \alpha}.
\]
\end{lemma}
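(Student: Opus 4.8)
The plan is to unfold both clauses of the definition of $\hTk{\cdot}{\cdot}$ and to observe that the right-hand side of the lemma is exactly what one obtains by evaluating the heights of the one-step extensions of the empty tree $\nt$.

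First I would determine which trees $T \in \ktr(\alpha)$ satisfy $T \succT \nt$. Since $\nt$ is the empty tree, the only way to obtain a tree from it by adding a single node is to create the root; the decreasing-labels condition imposes no constraint on a root in isolation, so its label may be any $\beta < \alpha$. Hence the one-step extensions of $\nt$ are precisely the single-node trees $T_\beta$ with root labelled $\beta$, and $\beta \mapsto T_\beta$ is a bijection between $[0,\alpha)$ and $\{T \in \ktr(\alpha) : T \succT \nt\}$.

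Next I would compute $\hTk{T_\beta}{\alpha}$ using the first clause of the definition. The empty positions (the $\nt$ nodes) of $T_\beta$ are exactly its $k$ child slots, each of which has the root as its father, so each contributes the summand $\hTk{\nt}{\beta}$. Taking the natural sum of these $k$ identical summands gives $\hTk{T_\beta}{\alpha} = \hTk{\nt}{\beta} \oplus \dots \oplus \hTk{\nt}{\beta} = \hTk{\nt}{\beta}*k$ by the definition of the natural product.

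Finally I would substitute this value into the second clause, $\hTk{\nt}{\alpha} = \bigvee\{\hTk{T}{\alpha}+1 : T \succT \nt\}$, and re-index the supremum along the bijection $\beta \mapsto T_\beta$, turning it into $\bigvee\{\hTk{\nt}{\beta}*k+1 : \beta < \alpha\}$, which is the asserted identity. I do not expect a real obstacle here: the argument is a direct unfolding of the two defining clauses. The only point that deserves care is the bookkeeping in the first step, namely checking that a single-node extension of $\nt$ is forced to be a labelled root with exactly $k$ empty children all sharing that root as their father, so that neither spurious summands nor extra constraints on $\beta$ creep into the computation.
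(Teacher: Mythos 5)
Your proposal is correct and follows essentially the same route as the paper: identify the one-step extensions of $\nt$ as the single-node trees with root label $\beta < \alpha$, compute the height of each as $\hTk{\nt}{\beta}*k$, and re-index the supremum. The paper phrases the re-indexing as two inequalities rather than an explicit bijection, but the content is identical, and your extra care in justifying why the root-tree contributes exactly $k$ summands $\hTk{\nt}{\beta}$ is a welcome elaboration of the paper's ``by definition.''
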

\begin{proof}
Fix an ordinal $\alpha$. We need to prove that
\[
\begin{split}
\bigvee\bp{ \hTk{T}{\alpha}+ 1 : T \in \ktr((\alpha)), T \succT \nt} \\ =\bigvee\bp{ \hTk{\nt}{\beta}*k +1 : \beta < \alpha}.
\end{split}
\]
Let $T \in \ktr((\alpha))$ such that $T \succT \nt$, then $T$ is a root-tree.  Let $\beta$ be the label of the root of $T$. Then, by definition
\[
	\hTk{T}{\alpha} = \hTk{\nt}{\beta}*k.
\]
Hence
\[
\begin{split}
 \bigvee\bp{ \hTk{T}{\alpha}+ 1 : T \in \ktr((\alpha)), T \succT \nt} \\ 
 \leq \bigvee\bp{ \hTk{\nt}{\beta}*k +1 : \beta < \alpha}.
\end{split}
\]

Vice versa, given $\beta \in \alpha$ let $T\succT \nt$ be the root-tree where the root's label is $\beta$. Then
\[
	\hTk{T}{\alpha}= \hTk{\nt}{\beta}*k,
\]
therefore 
\[
	\hTk{\nt}{\beta}*k+1 = \hTk{T}{\alpha}+1.
\]
\end{proof}

Now we prove our thesis about $\hTk{\cdot}{\cdot}$.

\begin{proposition}\label{Proposition: height}
Let $\alpha$ be an ordinal.
\begin{itemize}
\item If $T',\ T \in \ktr((\alpha))$ and $T' \succT T$ then $\hTk{T'}{\alpha} < \hTk{T}{\alpha}$.
\item Let $T \in \ktr((\alpha))$, then $\hTk{T}{\alpha}$ is the ordinal height of $T$ in $\ktr(\alpha)$:
\[
	\hTk{T}{\alpha} = \bigvee \bp{\hTk{T'}{\alpha}+1 : T' \succT T}.
\]
\end{itemize}
\end{proposition}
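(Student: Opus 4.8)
The plan is to prove the two bullet points essentially together, by simultaneous transfinite induction, because they are mutually dependent: the first bullet (that $\succT$ strictly decreases $h_k$) is exactly what is needed to establish the ``$\geq$'' direction of the height equation in the second bullet, while the ``$\leq$'' direction requires analyzing how $h_k$ behaves under a single node addition. First I would unfold the definition of $\hTk{T}{\alpha}$ as the natural sum $\bigoplus$ over the $\nt$-nodes (the empty leaf slots) of the values $\hTk{\nt}{\beta}$, where $\beta$ is the label of the father. Adding one node to $T$ to obtain $T'$ with $T' \succT T$ means filling one such empty slot: we pick an $\nt$-node whose father carries label $\beta$, assign it a new label $\gamma < \beta$ (decreasing, by the definition of $\ktr$), and this replaces one summand $\hTk{\nt}{\beta}$ by $k$ new summands $\hTk{\nt}{\gamma}$ (the $k$ fresh empty children of the newly added node).

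Next I would isolate the key local inequality. By Lemma~\ref{lemma: equiv} we have $\hTk{\nt}{\beta} = \bigvee\bp{\hTk{\nt}{\delta}*k + 1 : \delta < \beta}$, so for any $\gamma < \beta$ we get $\hTk{\nt}{\gamma}*k + 1 \leq \hTk{\nt}{\beta}$, hence $\hTk{\nt}{\gamma}*k < \hTk{\nt}{\beta}$. Since passing from $T$ to $T'$ replaces the summand $\hTk{\nt}{\beta}$ by $\hTk{\nt}{\gamma}*k = \hTk{\nt}{\gamma}\oplus\dots\oplus\hTk{\nt}{\gamma}$ (the contribution of the $k$ new empty children), and leaves all the other summands untouched, strict monotonicity of $\oplus$ in each argument yields $\hTk{T'}{\alpha} < \hTk{T}{\alpha}$. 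This proves the first bullet. The main subtlety here is bookkeeping: one must check that the remaining summands genuinely coincide, i.e. that adding a node only alters the single slot being filled and introduces exactly $k$ new $\nt$-children, which follows directly from the definition of $\succT$ as one-step extension.

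For the second bullet I would prove both inequalities of $\hTk{T}{\alpha} = \bigvee\bp{\hTk{T'}{\alpha}+1 : T' \succT T}$. The ``$\geq$'' direction is immediate from the first bullet: every $T' \succT T$ satisfies $\hTk{T'}{\alpha} < \hTk{T}{\alpha}$, hence $\hTk{T'}{\alpha}+1 \leq \hTk{T}{\alpha}$, and taking the supremum preserves this. For ``$\leq$'' I would show that $\hTk{T}{\alpha}$ is a supremum realized by the one-step extensions, distinguishing whether $\hTk{T}{\alpha}$ is a limit or a successor. This reduces, via the structure of $\oplus$, to the corresponding property of the leaf contributions: each summand $\hTk{\nt}{\beta}$ is itself a supremum $\bigvee\bp{\hTk{\nt}{\gamma}*k+1 : \gamma<\beta}$ by Lemma~\ref{lemma: equiv}, and replacing that summand by $\hTk{\nt}{\gamma}*k$ corresponds precisely to a one-step extension $T' \succT T$ filling that slot with label $\gamma$. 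Using the defining property of $\oplus$ as the least strictly-increasing binary function, any ordinal strictly below $\hTk{T}{\alpha}$ must be dominated by some such $\hTk{T'}{\alpha}$.

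I expect the main obstacle to be the ``$\leq$'' direction of the height equation, specifically the interaction between the natural sum $\oplus$ and the leaf-wise suprema. Because $\oplus$ is defined as a join over componentwise predecessors, one cannot simply push the supremum inside each summand independently; instead I would argue that any $\eta < \hTk{T}{\alpha} = \bigoplus_i s_i$ (where $s_i$ are the leaf contributions) is below $\bigoplus_i s_i$ in the natural-sum sense, so by the very definition of $\oplus$ there is a single component $i_0$ and a predecessor $s_{i_0}'<s_{i_0}$ with $\eta \leq (\bigoplus_{i\neq i_0} s_i)\oplus s_{i_0}' $, and then translate that single decrease $s_{i_0}' $ back into a one-step extension of $T$ using Lemma~\ref{lemma: equiv}. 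Making this translation precise — matching an arbitrary predecessor in the $\oplus$-ordering to an actual tree extension — is the delicate point, and it is where the explicit form $\hTk{\nt}{\beta}*k$ of each leaf contribution does the real work.
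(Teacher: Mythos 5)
Your proof is correct, and for the first bullet it is the same as the paper's: a one-step extension replaces a single summand $\hTk{\nt}{\gamma}$ (where $\gamma$ labels the father of the new node) by $k$ copies of $\hTk{\nt}{\beta}$ with $\beta<\gamma$, and Lemma \ref{lemma: equiv} gives $\hTk{\nt}{\beta}*k<\hTk{\nt}{\gamma}$, so strict monotonicity of $\oplus$ concludes. For the $\leq$ half of the second bullet, however, you take a genuinely different route. The paper argues by transfinite induction on $\alpha$ with a zero/successor/limit case split: in the successor case it rewrites $\hTk{T}{\beta+1}=\bigoplus_{i=1}^{k}\hTk{T_i}{\beta}$ over the immediate subtrees, uses the single-component-decrease property of $\oplus$ to select one subtree $T_j$, and invokes the induction hypothesis on $T_j$ to realize the decrease as a one-step extension, while the limit case reduces to a successor ordinal below $\alpha$. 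You instead work directly with the defining flat decomposition of $\hTk{T}{\alpha}$ as a natural sum over the empty slots, apply the $\oplus$-predecessor property once to isolate a single slot whose contribution drops to some $s'<\hTk{\nt}{\beta}$, and then use Lemma \ref{lemma: equiv} to pick $\delta<\beta$ with $s'\leq\hTk{\nt}{\delta}*k$, which is exactly the value contributed by filling that slot with a node labelled $\delta$. This is sound (in $\ktr(\alpha)$ any label $\delta<\beta$ is available for the new node) and it eliminates both the induction on $\alpha$ and the case analysis: the paper's recursive descent through immediate subtrees accomplishes by iteration what your single application of the predecessor property of $\oplus$ does at once. Two small points to tidy up: the case $T=\nt$ has no slot-with-father decomposition and must be discharged separately (it is precisely the definition of $\hTk{\nt}{\alpha}$, as the paper notes), and your opening framing --- a simultaneous transfinite induction and a successor/limit split on $\hTk{T}{\alpha}$ --- is not actually used in the argument you then carry out; only the $\geq$ direction needs the first bullet, and no induction is required.
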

\begin{proof}
\begin{itemize}
	\item Given $T' \succT T$, let $\gamma$ be the label of the father of the new node $T'$ and $\beta$ be the label of the new node of $T'$. Hence $\hTk{T'}{\alpha}$ has as addends $k$-many $\hTk{\nt}{\beta} $ instead of one $\hTk{\nt}{\gamma}$. Since the labelling is decreasing we have that $\beta < \gamma$. By definition
	\[
		\hTk{\nt}{\gamma}  =  \bigvee\bp{\hTk{\nt}{\beta}*k +1: \beta < \gamma}
	\]
	then $\hTk{\nt}{\gamma} > \hTk{\nt}{\beta}*k$. Since the $\oplus$ is increasing in each argument,  $\hTk{T'}{\alpha} < \hTk{T}{\alpha}$ holds.
	
	\item If $T=\nt$ then the thesis follows by the definition of $\hTk{\nt}{\alpha}$, then let $T \neq \nt$. Thanks to the previous point we have
	\[
		\hTk{T}{\alpha} \geq \bigvee \bp{\hTk{T'}{\alpha}+1 : T' \succT T}.
	\]
	We will prove the other inequality by induction over $\alpha$.
	\begin{itemize}
	\item Assume $\alpha=0$. Then $\hTk{T}{0}= \bigvee\emptyset = 0$.
	\item Assume $\alpha= \beta+1$. If the root of $T$ has label less than $\beta$ then by inductive hypothesis we are done, since $\hTk{T}{\beta+1}= \hTk{T}{\beta}$ and for each $T'\succT T$ $\hTk{T'}{\beta+1}= \hTk{T'}{\beta}$.
	So assume that the root of $T$ has label $\beta$. Let $T_1, \dots, T_k$ be the immediate subtrees of $T$. By definition
	\[
		\hTk{T}{\beta+1} = \bigoplus_{i=1}^{k}\hTk{T_i}{\beta}.
	\]
	We want to prove that for any $\gamma < \hTk{T}{\beta+1}$ there exists $T'\succT T$ such that $\gamma < \hTk{T'}{\beta+1}+1$. 
	Since 
	\[
		\gamma < \hTk{T}{\beta+1} =  \bigoplus_{i=1}^{k}\hTk{T_i}{\beta},
	\]
	by definition of natural sum there exist $\gamma_1,\ \dots,\ \gamma_k$ such that 
	\begin{itemize}
	\item there exists one $j \in [1,k]$ such that $\gamma_j < \hTk{T_j}{\beta}$;
	\item for any $i\in [1,k]$, if $i \neq j$ then $\gamma_i = \hTk{T_i}{\beta}$;
	\item $\gamma < \gamma_1 \oplus \dots \oplus \gamma_k$;
	\end{itemize}
	By induction hypothesis we have that
	\[
		\hTk{T_j}{\beta}= \bigvee\bp{\hTk{U}{\beta}:U\succT T_j}.
	\] 
	Then, since $\gamma_j <\hTk{T_j}{\beta}$, there exists $U\succT T_j$ such that $\hTk{U}{\beta} > \gamma_j$.  Let $T_j'=U$. For any $i\in [1,k]$ if $i \neq j$ we define $T_i'=T_i$.
	Let $T'$ be the tree whose root has label $\beta$ and immediate subtrees $T_1', \dots, T_k'$. By construction $T' \succT T$. Moreover
	\[
		\gamma < \bigoplus_{i=1}^{k}\hTk{T_i'}{\beta} = \hTk{T'}{\beta+1}.
	\]
	Then
	\[
		\hTk{T}{\beta+1} \leq \bigvee \bp{\hTk{T'}{\beta+1} +1 : T' \succT T}.
	\]
	\item  Assume $\alpha$ is limit. Then the root of $T$ has label $\gamma < \alpha$. Hence $\gamma+1<\alpha$, and by inductive hypothesis on $\gamma+1$ we are done, since $\hTk{T}{\alpha}= \hTk{T}{\gamma+1}$ and for each $T'\succT T$ we have $\hTk{T'}{\alpha}= \hTk{T'}{\gamma+1}$.
\end{itemize}
\end{itemize}
\end{proof}

Thanks to Lemma \ref{lemma: equiv} we may define $\hTk{\nt}{\cdot}$ as follows:

\[\hTk{\nt}{\alpha} =
    \begin{cases}
	 0 &\mbox{ if } \alpha=0;\\
	 \hTk{\nt}{\beta}*k + 1 &\mbox{ if } \alpha= \beta+1;\\
	 \bigvee_{\mu < \alpha} \hTk{\nt}{\mu} &\mbox{ if $\alpha$ is limit.}
    \end{cases}
\]

If we may compute $\hTk{\nt}{\alpha}$ then we may compute $\hTk{T}{\alpha}$, that is, by \ref{Proposition: height}, the ordinal height of any $T \in \ktr(\alpha)$.  We may easily compute $\hTk{\nt}{\alpha}$ if either $k=1$ or $\alpha < \omega^{\omega}$. 

\begin{lemma}
	\begin{itemize}
	    \item If $k=1$, $\hTone{\nt}{\alpha} = \alpha$;
		\item \[\hTk{\nt}{m} = \sum_{i=0}^{m-1} k^{i} = \frac{k^m-1}{k-1};\]
		\item \[\hTk{\nt}{\omega\cdot n + m } = \omega^n\cdot k^m + \sum_{i=0}^{m-1} k^{i}\]
	\end{itemize}
\end{lemma}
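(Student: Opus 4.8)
The plan is to work entirely from the recursive characterization of $\hTk{\nt}{\cdot}$ displayed just above the statement, namely $\hTk{\nt}{0}=0$, $\hTk{\nt}{\beta+1}=\hTk{\nt}{\beta}*k+1$, and $\hTk{\nt}{\lambda}=\bigvee_{\mu<\lambda}\hTk{\nt}{\mu}$ for limit $\lambda$, proving each clause by a suitable (transfinite) induction. Throughout I would use two elementary facts: on natural numbers the natural product $*$ coincides with the ordinary product, and, by the Cantor-normal-form description of $\oplus$ recalled at the beginning of the section, taking the $k$-fold natural product $*k$ of an ordinal simply multiplies every coefficient of its normal form by $k$. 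I would also record at the outset that $\hTk{\nt}{\cdot}$ is strictly increasing in its ordinal argument (immediate from the recursion, since $*k$ is inflationary for $k\ge 1$ and $+1$ is strict), so that at limit stages the supremum may be computed along any cofinal subsequence. The first two items concern $k=1$ and finite arguments; the third, whose closed form $\frac{k^m-1}{k-1}$ already presupposes $k>1$, is to be read for $k\ge 2$ and $n\ge 1$ (the case $n=0$ being exactly the second item).

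For the first item I would run a straightforward transfinite induction on $\alpha$: since $\alpha*1=\alpha$, the successor clause reads $\hTone{\nt}{\beta+1}=\hTone{\nt}{\beta}+1$ and the limit clause gives $\bigvee_{\mu<\lambda}\mu=\lambda$; together with $\hTone{\nt}{0}=0$ this yields $\hTone{\nt}{\alpha}=\alpha$. For the second item I would induct on the natural number $m$: the base $m=0$ is $\hTk{\nt}{0}=0=\sum_{i=0}^{-1}k^i$, and in the successor step I use that $\hTk{\nt}{m}$ is a natural number, so $\hTk{\nt}{m}*k=k\cdot\sum_{i=0}^{m-1}k^i=\sum_{i=1}^{m}k^i$, whence $\hTk{\nt}{m+1}=\sum_{i=1}^{m}k^i+1=\sum_{i=0}^{m}k^i$.

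The third item is the substantial one, and I would prove it by a double induction: a primary induction on $n\ge 1$ and, inside each $n$, a secondary induction on $m$. The secondary (successor) step is a direct computation with the natural product: writing $\beta=\omega^n\cdot k^m+\sum_{i=0}^{m-1}k^i$ in Cantor normal form and multiplying each coefficient by $k$ gives $\beta*k=\omega^n\cdot k^{m+1}+\sum_{i=1}^{m}k^i$, so adding $1$ reproduces the claimed value at $m+1$. The delicate point is the base of the secondary induction, $m=0$, where the argument $\omega\cdot n$ is a limit: here I must evaluate $\bigvee_{\mu<\omega\cdot n}\hTk{\nt}{\mu}$. Since $\hTk{\nt}{\cdot}$ is increasing and the ordinals $\omega\cdot(n-1)+m'$ are cofinal in $\omega\cdot n$, this supremum equals $\bigvee_{m'<\omega}\hTk{\nt}{\omega\cdot(n-1)+m'}$, which by the second item (for $n=1$) or the primary induction hypothesis (for $n\ge 2$) equals $\bigvee_{m'<\omega}\bigl(\omega^{n-1}\cdot k^{m'}+\sum_{i=0}^{m'-1}k^i\bigr)$.

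The main obstacle, and the step I would treat most carefully, is showing that this last supremum is exactly $\omega^n=\omega^n\cdot k^0$: the finite tails $\sum_{i=0}^{m'-1}k^i$ must be absorbed, and since $k\ge 2$ one has $\bigvee_{m'}k^{m'}=\omega$, so $\bigvee_{m'}\omega^{n-1}\cdot k^{m'}=\omega^{n-1}\cdot\omega=\omega^{n}$. This is precisely where the growth from $\omega\cdot n$ on the left to $\omega^{n}$ on the right is generated, and it is also where the hypothesis $k\ge 2$ is essential: for $k=1$ the supremum would collapse to $\omega^{n-1}$, consistently with the first item. Bounding the tails and justifying $\omega^{n-1}\cdot\omega=\omega^{n}$ through $\omega^{n-1}\cdot k^{m'}\le\omega^{n-1}\cdot k^{m'}+\sum_{i=0}^{m'-1}k^i\le\omega^{n-1}\cdot(k^{m'}+1)$ and taking suprema closes the base case and completes the induction.
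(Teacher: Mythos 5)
Your proof is correct and takes essentially the paper's approach: the paper disposes of this lemma with ``Immediate by induction over $\alpha$'', and your transfinite induction through the recursion $\hTk{\nt}{\beta+1}=\hTk{\nt}{\beta}*k+1$, with the coefficient-wise description of $*k$ at successors and the absorption $\bigvee_{m'}\bigl(\omega^{n-1}\cdot k^{m'}+\sum_{i=0}^{m'-1}k^i\bigr)=\omega^{n}$ at the limit stages $\omega\cdot n$, is exactly the computation being left to the reader. One small caveat: your parenthetical remark that the case $n=0$ of the third clause ``is exactly the second item'' is not literally true of the displayed formula (substituting $n=0$ yields $k^{m}+\sum_{i=0}^{m-1}k^{i}$, not $\sum_{i=0}^{m-1}k^{i}$), so the third clause must indeed be read only for $n\ge 1$, as your induction in fact assumes.
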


\begin{proof}
Immediate by induction over $\alpha$.
\end{proof}

%


Now we want to derive what is the value of $\hTk{\nt}{\alpha}$ for any $\alpha$. This analysis is only added for completeness and is not used to derive the results of this paper.

\begin{lemma}
Let $k \geq 2$ and let $\alpha = \lambda + n$ where $\lambda$ is either $0$ or a limit ordinal and $n$ is a natural number, then 
\begin{itemize}
\item if $\lambda = 0$: 
    	\[
		 \hTk{\nt}{n} = \frac{k^{n}-1}{k-1}
	    \]
\item otherwise 
      \[
		\hTk{\nt}{\alpha} =  k^{\alpha}+ \frac{k^{n}-1}{k-1}
	  \]
\end{itemize}
\end{lemma}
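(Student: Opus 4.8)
The plan is to argue by transfinite induction on $\alpha$, feeding on the recursive case analysis for $\hTk{\nt}{\cdot}$ that follows from Lemma \ref{lemma: equiv} (the clauses for $0$, successor, and limit). Writing $c_n = \frac{k^n-1}{k-1}$ for brevity, the case $\lambda=0$ is exactly the finite computation $\hTk{\nt}{n}=c_n$ already obtained above, so it can simply be invoked, and the remaining work lives in the successor and limit clauses. Before the induction, however, I would isolate one structural fact that is used throughout: \emph{for every limit $\lambda$ the ordinal $k^\lambda$ is additively principal}, i.e.\ a power of $\omega$, say $k^\lambda=\omega^\delta$ with $\delta\geq 1$. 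This is a short closure argument: since $k^\lambda=\bigvee_{\mu<\lambda}k^\mu$, any $\alpha,\beta<k^\lambda$ already lie below some $k^\mu$ with $\mu<\lambda$, whence $\alpha+\beta<k^{\mu}\cdot 2\leq k^{\mu+1}<k^\lambda$. The consequence I actually need is that where $k^\lambda$ meets finite data the natural and the ordinary operations coincide: $k^\lambda + (\text{finite}) = k^\lambda \oplus (\text{finite})$ and $k^\lambda * k = k^\lambda\cdot k$.

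For the successor clause, take $\alpha=\beta+1$; since $\alpha$ is a successor and $\lambda$ is $0$ or limit, necessarily $n\geq 1$ and $\beta=\lambda+(n-1)$. Applying the induction hypothesis gives $\hTk{\nt}{\beta}=k^\beta + c_{n-1}$ with $k^\beta=k^\lambda\cdot k^{n-1}=\omega^\delta\cdot k^{n-1}$, a single Cantor monomial, so this sum is simultaneously a natural sum. I would then compute $\hTk{\nt}{\alpha}=\hTk{\nt}{\beta}*k+1$ by distributing the natural product over the natural sum \cite{Carruth} and using the monomial rule $\omega^\delta\cdot k^{n-1}*\omega^0 k=\omega^\delta\cdot k^{n}$, together with the elementary finite identity $c_{n-1}\cdot k+1=c_n$. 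Collecting terms yields $\omega^\delta\cdot k^{n}+c_n=k^{\lambda+n}+c_n=k^\alpha+c_n$, which is the claim.

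For the limit clause we have $n=0$ and $\alpha=\lambda$, so the target reduces to showing $\bigvee_{\mu<\lambda}\hTk{\nt}{\mu}=k^\lambda$, and here I would exploit the continuity $k^\lambda=\bigvee_{\mu<\lambda}k^\mu$. For the upper bound, the induction hypothesis bounds each term: finite $\mu$ give $\hTk{\nt}{\mu}=c_\mu<\omega\leq k^\lambda$, while for $\mu\geq\omega$ the finite overshoot is absorbed, $\hTk{\nt}{\mu}=k^\mu+c<k^\mu+\omega\leq k^{\mu+1}<k^\lambda$; hence the supremum is $\leq k^\lambda$. For the lower bound I would show $\hTk{\nt}{\nu+1}\geq k^\nu$ for every $\nu<\lambda$ (immediate from $\hTk{\nt}{\nu+1}=\hTk{\nt}{\nu}*k+1$ in the two cases $\nu$ finite and $\nu\geq\omega$); since the ordinals $\nu+1$ are cofinal in the limit $\lambda$, the supremum is $\geq\bigvee_{\nu<\lambda}k^\nu=k^\lambda$.

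The main obstacle I anticipate is not the induction skeleton but the friction between the two arithmetics: the quantity $\hTk{\nt}{\cdot}$ is generated by the \emph{natural} product $*k$, whereas the closed form is stated in ordinary exponentiation $k^\alpha$, and reconciling them is exactly what the additive-principality of $k^\lambda$ and the monomial natural-product computation are for. A secondary subtlety worth flagging is the limit case, where each $\hTk{\nt}{\mu}$ strictly exceeds $k^\mu$ by a finite amount, so the equality of the two suprema must be argued by the matched upper and lower bounds above rather than termwise.
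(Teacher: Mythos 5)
Your proposal is correct and follows essentially the same route as the paper: induction on $\alpha$ via the recursive characterization $\hTk{\nt}{\alpha}=\bigvee\{\hTk{\nt}{\beta}*k+1:\beta<\alpha\}$, with the same split into finite successor, infinite successor, and limit cases resolved by cofinality. The only difference is one of rigor, not of method: you explicitly justify (via additive principality of $k^\lambda$) why the natural operations $\oplus$ and $*k$ may be computed as ordinary ones in the closed form, a step the paper's computation performs silently.
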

\begin{proof}
By induction on $\alpha$. Observe that:
\[
	\hTk{\nt}{\alpha}= \bigvee\bp{\hTk{\nt}{\beta}*k+1:\beta < \alpha}.
\]
Then we have three cases
\begin{itemize}
	\item $\alpha=0$. Then $\hTk{\nt}{0}=0$.
	\item $\alpha=\beta+1$. Then $\alpha= \lambda + (n+1)$, $\beta= \lambda + n$. Hence:
	\[
		\hTk{\nt}{\alpha}= \bigvee\bp{\hTk{\nt}{\gamma}*k+1:\gamma < \alpha}= \hTk{\nt}{\beta}*k+1.
	\]
	\begin{itemize}
		\item If $\alpha$ is finite, also $\beta$ is; then
		\[
			\begin{split}
			\hTk{\nt}{\alpha}&= \hTk{\nt}{\beta}*k+1 = \frac{k^{n}-1}{k-1}*k + 1 \\
			&=\frac{k^{n+1} - k +k-1}{k-1}= \frac{k^{n+1}-1}{k-1}.
			\end{split}
		\]
		\item If $\alpha$ is infinite, also $\beta$ is; then  
		\[
			\begin{split}
			\hTk{\nt}{\alpha}&= \hTk{\nt}{\beta}*k+1 = (k^{\beta}+\frac{k^{n}-1}{k-1})*k + 1\\ &=k^{\beta+1} + \frac{k^{n+1}-1}{k-1}= k^{\alpha} + \frac{k^{n+1}-1}{k-1}.
			\end{split}
		\]
	\end{itemize}
	\item $\alpha$ limit. Then
	\[
	  \begin{split}
	     \hTk{\nt}{\alpha}=& \bigvee\bp{\hTk{\nt}{\beta}*k+1: \beta < \omega} \\ 
		&\vee \ \bigvee\bp{\hTk{\nt}{\beta}*k+1: \omega \leq \beta < \alpha}.
	  \end{split}
	\]
	We may study two different cases:
	\begin{itemize}
		\item If $\alpha=\omega$. Then only the first set is not empty. Moreover it is cofinal in $\omega$. Then $\hTk{\nt}{\omega}= \omega = k^\omega$
		\item If $\alpha > \omega$. Then the first set is cofinal in $\omega$, while the second set is cofinal in $[\omega,  \alpha)$. Then, since $\alpha$ is limit:
		\[
			\hTk{\nt}{\alpha}=\omega \vee \bigvee\bp{k^{\beta}+\frac{k^{n+1}-1}{k-1} +1: \omega \leq \beta < \alpha}= k^{\alpha}.
		\]
	\end{itemize}
\end{itemize}
\end{proof}

Since if $\alpha$ is a limit ordinal, then $2^\alpha= k^\alpha$ for any $k \geq 2$, it follows that if $\alpha$ il limit then 

\[
	\hTk{\nt}{\alpha} =  k^\alpha = 2^\alpha.
\]

Moreover if $\alpha= \omega\cdot k$, we have that $2^\alpha=\omega^k$.

\subsection{Erd\H{o}s trees}

In this subsection we recall the definitions of Erd\H{o}s trees. Erd\H{o}s trees are inspired by the trees used first by Erd\H{o}s then by Jockusch in their proofs of Ramsey \cite{Jockusch}, hence the name. Given $k$ many relations $R_1,\dots, R_k$ we may think of each branch an Erd\H{o}s-tree on $R_1,\dots, R_k$ as a simultaneous construction of all $R_i$-decreasing transitive lists for all $i\in [1,k]$.

In order to formally define an Erd\H{o}s tree we need the definition of colored list: a list of $n$ elements, with an assignment of colors to the $n-1$ edges. Both the empty list and all one-element lists have the empty set of edges, therefore have the empty assignment of colors to edges. 

\begin{definition}
	A colored list $(L, f)$ is a pair, where $L= \ap{x_1, \dots, x_n}$ is a list on $I$ and $f= \ap{c_1, \dots, c_{n-1}}$ is a list on $C$. $\nil = (\ap{}, \ap{})$ is the empty colored list, $\clx(x)= (\ap{x}, \ap{})$ is the colored list whose only node is $x$, and $\LC(C)$ is the set of the colored lists with colors in $C$.
\end{definition}

We can define the relation one-step extension on colored lists: $\succ_c$ is the one-step extension of color $c$ and $\succc$ is the one-step extension of any color. Assume $x \in I$ and $\lambda, \mu \in \LC(C)$:

\begin{itemize} 
\item $\lambda \conc_c \clx(x)  \succ_c \lambda$;
\item $\lambda \succc \mu$ if $\lambda \succ_c \mu$ for some $c \in C$,
\end{itemize}

where, $c \in C$ and $\conc_c$ is the composition of color $c$ of two colored lists by connecting the last element of the first list with the first of the second list with an edge of color $c$. Formally:
\[
\begin{split}
	&\nil \conc_c \lambda = \lambda \conc_c \nil = \lambda;\\
 	&\mbox{if } L, M \neq \nil,  \ (L, f) \conc_c (M, g) = (L \conc M, f \conc \langle c \rangle \conc g).
\end{split}
\] 
An $R_1, R_2, \dots, R_k$-colored list is an attempt to build simultaneously one $R_h$-decreasing list for each $h \in [1,k]$.

\begin{definition}
	Let $C=[1,k]$. $(L,f)\in \LC(C)$ is a $R_1,R_2, \dots, R_k$-colored list if $L=\ap{x_1, \dots, x_n}$, $f=\ap{c_1,\dots, c_{n-1}}$, and 
	\[
		\forall i \in [1,n-1].(c_i=h \implies (\forall j \in [1,n]. i<j\implies (x_j R_h x_i))).
	\]
	$\LC(R_1, R_2, \dots, R_k) \subseteq \LC(C)$ is the set of $(R_1,R_2, \dots, R_k)$-colored lists.
\end{definition}

Take any $R_1, R_2, \dots, R_k$-colored list $L$. Let $L'$ be the sublist of $L$ consisting of all elements which either are followed by a branch of color $h$, or are at the end of $L$. Then $L'$ is an $R_h$-decreasing list, because by definition each  element of $L'$ is connected by $R_h$ to each element of $L$ after it, hence, and with more reason, is connected by $R_h$ to each element of $L'$ after it. Thus, as we anticipated, any $R_1, R_2, \dots, R_k$-colored list L defines simultaneously one $R_h$-decreasing list for each $h \in [1,k]$.

\begin{definition}
	A $k$-ary tree $T$ is a set of colored lists on $I$, such that:
	\begin{enumerate}
		\item $\nil$ is in $T$;
		\item If $\lambda \in T$ and $\lambda \succc \mu$, then $\mu \in T$;
		\item Each list in $T$ has at most one one-step extension for each color $c \in C$: if $\lambda_1, \lambda_2, \lambda \in T$ and $\lambda_1, \lambda_2 \succ_c \lambda$, then $\lambda_1 = \lambda_2$.
	\end{enumerate} 
For all sets $\setL \subseteq \LC(C)$ of colored list, $\ktr(\setL)$ is the set of $k$-ary trees whose branches are all in $\setL$.
\end{definition}

We need also the one-step extension $\succT$ between $k$-ary tree; $T' \succT T$ if $T'$ has one leaf more than $T$. 

\begin{definition}[One-step extension for $k$-ary trees]
If $T$ is a $k$-ary tree and $\lambda \in T$ and $\mu \succ_c \lambda$ for some $c \in [1,k]$ and $\lambda' \succ_c \lambda$ for no $\lambda' \in T$, then 
$$T \cup \bp{\mu} \succT T$$
\end{definition}

We call an {\em Erd\H{o}s-tree over $R_1,\dots, R_k$} any $k$-ary tree whose branches are all in $\LC(R_1,\dots, R_k)$. 

In \cite{Hclosure} we proved that each one-step step extension in a $H(R_1\cup \dots \cup R_k)$ may be simulated as a one-step extension of some Erd\H{o}s tree on $(R_1,\dots, R_k)$, that is, as adding a child to some $R_1, R_2, \dots, R_k$-colored list of the tree. From the well-foundation of the set $\ktr(\LC(R_1,\dots, R_k))$ of Erd\H{o}s trees we derived $H$-closure Theorem. Now we want to use the ordinal bound for $k$-ary trees to derive an ordinal bound for $H(R_1\cup \dots \cup R_k)$, in the case $R_1, \dots,  R_k$ all have height $\omega$.

\subsection{Labelling an Erd\H{o}s tree}

So let now consider only the Erd\H{o}s trees (in $\ktr(\LC(R_1, \dots, R_k))$), following the notation of \cite{Hclosure}. From now on we will assume that $R_1, \dots, R_k$ have height $\omega$. 

We may associate to each node the $k$-uple $(y_1, \dots, y_k)$ of the integer heights of the node w.r.t. the relations $R_1, \dots, R_k$. Thus, it is enough to compute an upper bound to the ordinal height of an Er\H{o}s tree $T$ w.r.t $\succT$ in the following case:  the set of nodes of $T$ is $I=\omega \times \omega \dots \times \omega$ ($k$-many times), and for all $h \in [1,k]$ $(y_1, \dots, y_k) R_h (y_1', \dots, y_k')$ is equivalent to $y_h <y_h'$. If we are able to give an upper bound in this case, we are able to give an upper bound whenever $R_1, \dots, R_k$ have height $\omega$. There is no obvious guess about such an height: if $(y_1, \dots, y_k)$ is a node and $(y_1', \dots, y_k')$ is the child number $h$ of the node, all we do know is that $y_h > y_h'$. The remaining components of $(y_1, \dots, y_k)$ and $(y_1', \dots, y_k')$ may be in any relation. In fact, it is not even evident that all branches are finite: this result requires, and is immediately equivalent to, the Ramsey Theorem.

Our first task will be to label the nodes of any $R_1, \dots, R_k$-list in a decreasing way, by ordinals $< (\omega \cdot k)$. To this aim, we first introduce the notion of $i$-node.

\begin{definition}
Let $T$ be in $\ktr(\LC(R_1, \dots, R_k))$. Assume $y = (y_1, \dots, y_k)$ is a node of $T$. Let $i \in \NN$. 
\begin{enumerate}
\item $(y_1, \dots, y_k)$ is an $i$-node of $T$ w.r.t. $h_1, \dots, h_i$ if the branch from the root to $y$ has exactly $i$-many different colors $h_1, \dots, h_i$.
\item Assume $y$ is an $i$-node w.r.t. $h_1, \dots, h_i$. For any $j \in [1,i]$, we denote by $ y^{h_j}= (p^{h_j}_1, \dots, p^{h_j}_k)$ the lowest proper ancestor of $y=(y_1, \dots, y_k)$  in the branch from the root to the node, which is followed by an edge of color $h_j$.
\end{enumerate}
\end{definition}

Every node is an $i$-node for some $i \in [0,k]$, and $i>0$ if and only if the node is not the root. By definition, if a node of the branch $z$ is followed by an edge of color $h$ then all descendants of $z$ in the branch are smaller w.r.t. $R_h$. Thus, if $y=(y_1, \dots, y_k)$ is an $i$-node of $T$ w.r.t. $h_1, \dots, h_i$, then:
\begin{itemize}
\item for any proper ancestor $z$ of $y$ we have $y R_{h_j} z$, for some $j \in [0,i]$;
\item for any $j \in [0,i]$, there exists an ancestor $z$ of $y$ such that $y R_{h_j} z$.
\end{itemize}

The color $h_j$ denotes the edge from the child number $h_j$. Thus, for any $j \in [1,i]$, the node $y$ is a descendant of the child number $h_j$ of the node $y^{h_j}$.

Then we may label the node $(z_1, \dots, z_k)$  in a decreasing way with ordinals $< \omega \cdot k$, as follows.

\begin{definition} (the labelling $\alpha$).
Let $ T \in \ktr(\LC(R_1, \dots, R_k))$ and let  $(z_1, \dots, z_k)$  be a node of  $T$:
\begin{itemize}
\item if $(z_1, \dots, z_k)$ is the root of the tree, then 
\[
	\alpha((z_1, \dots, z_k)) = \max_{i\in[1,k]}\bp{z_i+1} \oplus \omega*(k-1);
\]
\item if, for some $j>0$,  $(z_1, \dots, z_k)$ is a $j$-node w.r.t. $h_1, \dots h_j$
\[
	\alpha((z_1, \dots, z_k)) = p^{h_1}_{h_1} \oplus \dots \oplus p^{h_j}_{h_j} \oplus \omega*(k-j).
\]
\end{itemize}
\end{definition}

We may observe that each node has label less than the one of its father.

\begin{lemma}
The labelling $\alpha$ is decreasing w.r.t. the father/child relation.
\end{lemma}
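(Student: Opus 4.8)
The plan is to argue by a case analysis on a father node $z=(z_1,\dots,z_k)$ and the color $c$ of the edge leading to a fixed child $w=(w_1,\dots,w_k)$. Throughout I will use the single arithmetic fact that the natural sum $\oplus$ is \emph{strictly} increasing in each argument (the property already invoked in the proof of Proposition~\ref{Proposition: height}): if $\beta<\beta'$ then $\gamma\oplus\beta<\gamma\oplus\beta'$. The whole lemma reduces to showing that, in each case, the label of $w$ is obtained from the label of $z$ by replacing exactly one summand by a strictly smaller one, while leaving the remaining summands untouched.

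First I would record the key bookkeeping observation about how the data $y^{h}=(p^h_1,\dots,p^h_k)$ change when passing from $z$ to its child $w$. The edge $z\to w$ carries color $c$, so among the proper ancestors of $w$ the node $z$ is the lowest one followed by color $c$; hence for $w$ we have $y^{c}=z$ and $p^{c}_{c}=z_c$. For every other color $h\neq c$, the edge out of $z$ in the branch to $w$ is not of color $h$, so the lowest proper ancestor of $w$ followed by $h$ coincides with that of $z$, and the corresponding summand $p^{h}_{h}$ is unchanged. This is the delicate, and genuinely combinatorial, part of the argument; once it is in place the ordinal computation is routine.

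Then I would split into three cases. \textbf{(i)} If $z$ is the root, then $w$ is a $1$-node w.r.t.\ $c$, so $\alpha(w)=z_c\oplus\omega*(k-1)$ while $\alpha(z)=\max_{i}\{z_i+1\}\oplus\omega*(k-1)$; since $z_c<z_c+1\le\max_i\{z_i+1\}$, strict monotonicity gives $\alpha(w)<\alpha(z)$. \textbf{(ii)} If $z$ is a $j$-node w.r.t.\ $h_1,\dots,h_j$ and $c=h_\ell$ is a color already occurring on the branch, then $w$ is again a $j$-node w.r.t.\ the same colors; by the bookkeeping step only the $\ell$-th summand changes, from $(y^{h_\ell})_{h_\ell}$ (for $z$) to $z_{h_\ell}$ (for $w$). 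Because $z$ is a descendant of the child of color $h_\ell$ of $y^{h_\ell}$, we have $z\mathrel{R_{h_\ell}}y^{h_\ell}$, i.e.\ $z_{h_\ell}<(y^{h_\ell})_{h_\ell}$, so again $\alpha(w)<\alpha(z)$.

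\textbf{(iii)} Finally, if $c$ is a new color, then $w$ is a $(j+1)$-node, and (using commutativity of $\oplus$) $\alpha(w)$ equals $\alpha(z)$ with the tail $\omega*(k-j)$ replaced by $z_c\oplus\omega*(k-j-1)$. Writing $\omega*(k-j)=\omega\oplus\omega*(k-j-1)$ from the definition of the natural product and using $z_c<\omega$, strict monotonicity yields $z_c\oplus\omega*(k-j-1)<\omega\oplus\omega*(k-j-1)=\omega*(k-j)$, hence $\alpha(w)<\alpha(z)$. The main obstacle, as noted, is case (ii): one must be sure that the lowest-ancestor pointers $y^{h}$ for the unchanged colors really are preserved, so that all other summands cancel and the single inequality $z_{h_\ell}<(y^{h_\ell})_{h_\ell}$ suffices.
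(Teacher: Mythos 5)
Your proposal is correct and follows essentially the same three-way case split as the paper (father is the root / edge color already on the branch / edge color new), with the same key inequalities in each case. The extra bookkeeping you make explicit --- that the ancestor pointers $y^{h}$ for the unchanged colors are preserved, and the expansion $\omega*(k-j)=\omega\oplus\omega*(k-j-1)$ in the new-color case --- is left implicit in the paper but is exactly what its argument relies on.
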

\begin{proof}
Let $(z_1, \dots, z_k)$ be a node of the tree and assume that $(y_1, \dots, y_k)$ is its father, then we have three possibilities.
\begin{itemize}
\item If the father is the root then there exists $j \in k$ such that:
\[
\begin{split}
\alpha((z_1, \dots, z_k)) &=  y_{h_j} \oplus \omega*(k-1) \\
 &<\max_{i\in[1,k]}\bp{y_i+1} \oplus \omega*(k-1) = \alpha((y_1, \dots, y_k)) 
\end{split}
\]
\item If, for some $j>0$, the father is a $j$-node and the child is still a $j$-node, then the child is connected to its father with the relation $R_{h_i}$ for some $i \in [1,j]$. Hence the lowest $h_i$-ancestor of the child is its father (whose $h_i$ component is less than the one of its $h_i$-ancestor), then the label decreases. 
\item If, for some $j>0$, the father is a $j$-node and the child is a $j+1$-node then the labels decreases since we have an \vgt{infinite component that becames finite}.
\end{itemize}
\end{proof}

We will prove that the $R_1 \cup \dots \cup R_k$-homogeneous sequences are interpretable in Erd\H{o}s trees where the branching are decreasing with respect to $R_1 \oplus \dots \oplus R_k$.

\begin{lemma}\label{lemma: aus1}
	If $T'$, $T$ are Erd\H{o}s trees and $T' \succT T$ then $\hTk{T'}{\omega\cdot k} < \hTk{T}{\omega\cdot k}$.
\end{lemma}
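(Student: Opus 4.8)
The plan is to transport the Erd\H{o}s-tree situation into the setting of $k$-ary trees on an ordinal (the first notion of Section 2), where the first item of Proposition \ref{Proposition: height} already states that $\hTk{\cdot}{\alpha}$ strictly decreases under a one-node extension. Concretely, I would read each Erd\H{o}s tree $T \in \ktr(\LC(R_1,\dots,R_k))$ through the labelling $\alpha$: relabel every node $(z_1,\dots,z_k)$ by the ordinal $\alpha((z_1,\dots,z_k))$, and regard the colour-$c$ child of a node as its child in position $c$. This produces a finite $k$-branching tree carrying ordinal labels, which is exactly the kind of object to which the first-notion map $\hTk{\cdot}{\omega\cdot k}$ applies, and it is how $\hTk{T'}{\omega\cdot k}$ in the statement should be read.

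First I would check that this relabelled tree is a legitimate element of $\ktr(\omega\cdot k)$ in the first sense, i.e.\ that its labels are decreasing along edges and lie in $[0,\omega\cdot k)$. The decreasing-along-edges condition is precisely the content of the preceding lemma (the labelling $\alpha$ is decreasing w.r.t.\ the father/child relation). For the range, the root label $\max_{i}\bp{z_i+1}\oplus\omega*(k-1)$ is a finite ordinal natural-summed with $\omega*(k-1)=\omega\cdot(k-1)$, hence $<\omega\cdot k$; and for a $j$-node with $j>0$ the label $p^{h_1}_{h_1}\oplus\dots\oplus p^{h_j}_{h_j}\oplus\omega*(k-j)$ is a finite ordinal natural-summed with $\omega\cdot(k-j)$, hence $<\omega\cdot(k-j+1)\le\omega\cdot k$. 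So every Erd\H{o}s tree, relabelled, sits inside $\ktr(\omega\cdot k)$.

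Next I would match the two one-step extension relations. By definition $T'\succT T$ for Erd\H{o}s trees means $T'=T\cup\bp{\mu}$ for a single new leaf $\mu\succ_c\lambda$, where $\lambda\in T$ has no colour-$c$ child in $T$. Under the relabelling this adds exactly one labelled node, in position $c$ below $\lambda$, and the preceding lemma guarantees its label is strictly below that of $\lambda$; so the relabelled $T'$ is obtained from the relabelled $T$ by adding one node, i.e.\ $T'\succT T$ holds in $\ktr(\omega\cdot k)$ as well. Applying the first item of Proposition \ref{Proposition: height} with $\alpha=\omega\cdot k$ then yields $\hTk{T'}{\omega\cdot k}<\hTk{T}{\omega\cdot k}$.

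I expect the only delicate point to be the bookkeeping around the two overloaded uses of $\ktr$ and of $\succT$: one must be sure that the labelling really lands each Erd\H{o}s tree inside the decreasing-label trees of the first notion, so that adding an Erd\H{o}s leaf is literally adding one node there. Once this identification is made, the inequality is immediate from Proposition \ref{Proposition: height}, and no separate ordinal computation is needed.
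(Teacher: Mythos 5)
Your proof is correct and follows essentially the same route as the paper, which simply cites Proposition \ref{Proposition: height} together with the fact that the labelling $\alpha$ is decreasing; you have merely filled in the bookkeeping (range of the labels in $[0,\omega\cdot k)$ and the matching of the two one-step extension relations) that the paper leaves implicit.
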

\begin{proof}
	It follows by Proposition \ref{Proposition: height}, since $\alpha(\cdot)$ is a decreasing labelling. 
\end{proof}

Moreover, if each relation has height $ \omega $ then we have 
\[
	 \hTk{\cdot}{\omega\cdot k}: \ktr(\LC(R_1, \dots, R_k)) \rightarrow \omega^{k}+1;
\] 
where  $\hTk{\nt}{\omega\cdot k}=\omega^k$ and for each $T \in \ktr(\LC(R_1, \dots, R_k)) \setminus\{ \nt \}$ $\hTk{T}{\omega\cdot k}<\omega^k$. 

Then we have a primitive recursive function from the set of Erd\H{o}s trees over $R_1, \dots, R_k$ in $\omega^k+1$ such that if $T' \succT T$ then  $\hTk{T'}{\omega\cdot k} < \hTk{T}{\omega\cdot k}$. From this fact, and the fact that we may embed any transitive subset of $R_1 \cup \dots \cup R_k$ in the set of Erd\H{o}s trees over $R_1, \dots, R_k$,  we will derive our results about the Termination Theorem.

From now on, we will use $\htree{T}$ instead of $\hTk{T}{\omega\cdot k}$.
\section{A primitive recursive bound for a special case of the Termination Theorem}

In order to state our result about primitive recursive sets we need the following definition.

\begin{definition}
Let $D$ be any subset of $\stati$ and $R$ any binary relation on $\stati$. $R$ is the graph of a primitive recursive function restricted to a primitive recursive domain $D$ if
\begin{enumerate}
\item $D$ is primitive recursive and
\item $R$ is the graph of a primitive recursive function $f: \stati \rightarrow \stati$ restricted to $D$: i.e.
\[
	R=\bp{(x,f(x)) : x \in D}.
\]
\end{enumerate}
\end{definition}

We may formally  state our main result as follows: given a reduction relation which is a the graph of a primitive recursive function restricted to a primitive recursive domain such that there exists a disjunctively well-founded transition invariant whose relations are primitive recursive and have height $\omega$, there exists a primitive recursive bound to the number of reductions steps.

\subsection{Finding a primitive recursive bound with the lexicographic order}

\begin{lemma}\label{lemma: banale}
	If $\sigma: \NN \rightarrow \NN$ is primitive recursive and there exist $m,n \in \NN$ such that $m<n$ and $\sigma(m) < \sigma(n)$ then
	\[
		\exists p \in [m,n-1] (\sigma(p) < \sigma(p+1)).
	\]
\end{lemma}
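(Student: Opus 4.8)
The plan is to argue by contraposition through a telescoping (weak monotonicity) argument. Note first that the primitive recursiveness of $\sigma$ plays no role in the \emph{truth} of the statement; its only purpose is to guarantee that the predicate $\sigma(p)<\sigma(p+1)$ is decidable, so that the witness $p$ may be obtained by bounded search and is itself primitive recursive in $m,n$. I would therefore negate the conclusion and assume
\[
	\forall p \in [m,n-1].\ \sigma(p) \geq \sigma(p+1),
\]
aiming to contradict $\sigma(m) < \sigma(n)$.

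First I would prove, by induction on $j$ ranging over $0 \leq j \leq n-m$, the auxiliary statement $\sigma(m) \geq \sigma(m+j)$. The base case $j=0$ is an equality. For the inductive step, the hypothesis $\sigma(m) \geq \sigma(m+j)$ combines with the assumed inequality $\sigma(m+j) \geq \sigma(m+j+1)$, valid because $m+j \in [m,n-1]$, to give $\sigma(m) \geq \sigma(m+j+1)$. Instantiating the auxiliary statement at $j = n-m$ yields $\sigma(m) \geq \sigma(n)$, contradicting $\sigma(m) < \sigma(n)$. Hence the negated assumption fails, and some $p \in [m,n-1]$ satisfies $\sigma(p) < \sigma(p+1)$.

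There is no genuine obstacle here: this is a routine discrete intermediate-value fact, whence its nickname \emph{banale}. The only point deserving care, should a constructive reading be wanted (as elsewhere in the paper), is that the contrapositive step is confined to the \emph{finite} interval $[m,n-1]$ over a decidable predicate, so that ``the predicate fails everywhere'' is genuinely refutable and bounded minimization returns the least such $p$ primitive-recursively. I would present the telescoping induction rather than a bare search, since it makes the single chain $\sigma(m) \geq \sigma(m+1) \geq \dots \geq \sigma(n)$ explicit and delivers the contradiction transparently.
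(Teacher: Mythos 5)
Your proof is correct and follows essentially the same route as the paper: both negate the conclusion (licensed by decidability of the finite conjunction), chain the inequalities to get $\sigma(m)\geq\sigma(n)$, and contradict the hypothesis. You merely make explicit the telescoping induction that the paper leaves as an immediate step.
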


\begin{proof}
	Since the statement is decidable we may reason by contradiction and de Morgan. Assume the opposite: if $\sigma$
	\[
		\forall p \in [m,n-1] (\sigma(p) \geq \sigma(p+1)),
	\]
	Then $\sigma(m) \geq \sigma(n)$. Contradiction.
\end{proof}

We denote with $\preccurlyeq_k$ the lexicographic order of $\NN^k$.

Given a function $g$, define $g^n(x) = g \circ g^{n-1} (x)$. We may observe that if $g$ is primitive recursive, also $H(n,x)= g^{n+1}(x)$ is. In fact:

\[
	H(n,x)=
	\begin{cases}
	x & n=0\\
	g(H(n-1, x)) &\mbox{otherwise}.
	\end{cases}
\]

\begin{lemma}\label{ausiliario}
	For each $\sigma: \NN \rightarrow \NN^k$ primitive recursive, there exists $g: \NN \rightarrow \NN$ primitive recursive such that
	\[
		\forall n \exists m \in [n, g(n)](\sigma(m) \preccurlyeq_k \sigma(m+1)).
	\]
\end{lemma}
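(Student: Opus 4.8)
The plan is to argue by induction on $k$, peeling off the first coordinate at each stage and invoking the induction hypothesis on the remaining $k-1$ coordinates, letting the lexicographic structure control how the head and the tail interact. Throughout I write, for a strict lexicographic decrease at position $m$, the relation $\sigma(m+1) \prec_k \sigma(m)$, i.e. the negation of the desired $\sigma(m) \preccurlyeq_k \sigma(m+1)$.

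For the base case $k=1$, the order $\preccurlyeq_1$ is just $\le$ on $\NN$, so if the conclusion failed throughout $[n,N]$ we would have a strictly decreasing run $\sigma(n) > \sigma(n+1) > \dots > \sigma(N+1)$ of natural numbers; such a run has length at most $\sigma(n)+1$, so a non-decrease $\sigma(m) \le \sigma(m+1)$ must occur with $m \le n+\sigma(n)$. Hence $g(n) = n+\sigma(n)$ works and is primitive recursive; this is exactly Lemma \ref{lemma: banale} specialised to consecutive arguments.

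For the inductive step I would write $\sigma(m) = (\sigma_1(m), \sigma'(m))$ with head $\sigma_1 : \NN \to \NN$ and tail $\sigma' : \NN \to \NN^{k-1}$, both primitive recursive, and use the defining property of the lexicographic order: $\sigma(m+1) \prec_k \sigma(m)$ holds iff either $\sigma_1(m) > \sigma_1(m+1)$, or $\sigma_1(m) = \sigma_1(m+1)$ together with $\sigma'(m+1) \prec_{k-1} \sigma'(m)$. By the induction hypothesis there is a primitive recursive $g'$ such that every window $[a, g'(a)]$ contains a tail non-decrease; I pass to a monotone majorant $\hat G$ of $p \mapsto g'(p)+1$, still primitive recursive with $\hat G(p) > p$. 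Starting from $p_0 = n$, at stage $j$ let $q_j \le g'(p_j)$ be the least position $\ge p_j$ with $\sigma'(q_j) \preccurlyeq_{k-1} \sigma'(q_j+1)$ (it exists by the hypothesis). Since the tail fails to strictly decrease at $q_j$, the displayed equivalence gives two cases: either $\sigma(q_j) \preccurlyeq_k \sigma(q_j+1)$, and $m = q_j$ is the witness sought, or $\sigma_1(q_j) > \sigma_1(q_j+1)$, in which case I set $p_{j+1} = q_j+1$ and continue. On the block $[p_j, q_j]$ the tail strictly decreases while (by minimality of $q_j$ and the standing assumption that no witness has yet appeared) the whole sequence strictly lex-decreases, which forces $\sigma_1$ to be non-increasing there; consequently each continuation satisfies $\sigma_1(p_{j+1}) < \sigma_1(p_j)$. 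As $\sigma_1(p_0) > \sigma_1(p_1) > \cdots$ is a strictly decreasing sequence of naturals, the process continues at most $\sigma_1(n)$ times and must terminate with a genuine non-decrease. For the bound, $p_{j+1} \le g'(p_j)+1 \le \hat G(p_j)$ gives $p_j \le \hat G^{(j)}(n)$, and the stopping position satisfies $m = q_j \le \hat G(p_j) \le \hat G^{(j+1)}(n)$ with $j \le \sigma_1(n)$, so I set $g(n) := \hat G^{(\sigma_1(n)+1)}(n)$.

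The hard part will be twofold. First, the combinatorial heart is checking that the head $\sigma_1$ is non-increasing across each block, so that every continuation strictly lowers $\sigma_1$ and the number of stages is capped by $\sigma_1(n)$; this is where the lexicographic trichotomy and the minimality of $q_j$ must be used carefully. Second, and this is the point that makes the statement non-trivial, one must package $g(n) = \hat G^{(\sigma_1(n)+1)}(n)$ as a primitive recursive function even though the \emph{number} of iterations of $\hat G$ itself depends on $n$; this is handled precisely by the remark that $H(n,x) = g^{n+1}(x)$ is primitive recursive, applied to $\hat G$ with the primitive recursive exponent $\sigma_1(n)+1$. The monotonisation step $g' \rightsquigarrow \hat G$ is routine bookkeeping needed only to make the iterate bound uniform in $n$.
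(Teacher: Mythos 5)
Your proposal is correct and follows essentially the same route as the paper: induction on $k$, applying the induction hypothesis to roughly $\sigma_1(n)+2$ consecutive windows, and making the iterated bound primitive recursive via the $H(n,x)=g^{n+1}(x)$ remark. The only (cosmetic) difference is in the counting step: the paper first splits on whether $\sigma_1$ ever increases on the big interval and then applies a pigeonhole argument to a non-increasing sequence of $\sigma_1$-values at the tail-witnesses, whereas you run a sequential descent argument in which each failed tail-witness forces a strict drop of $\sigma_1$ --- the two bookkeepings yield the same bound.
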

\begin{proof}
	By induction on $k$.
	If $k=1$ we put
	\[
		g(n):= n +\sigma(n)+1.
	\]
	Let $n \in \NN$, we want to prove that there exists $m \in [n,n+\sigma(n)]$ such that $\sigma(m) \leq \sigma(m+1)$. Suppose, by contradiction, 
	\[
		\forall m \in[n, n+\sigma(n)+1] \sigma(m) > \sigma(m+1).
	\]
	then we obtain a sequence of $\sigma(n)+2$ many  decreasing natural numbers from $\sigma(n)$. Contradiction.

	Assume that it holds for $k$. We will prove it for $k+1$.
	Let
	\[
		\begin{split}
			\sigma: \NN &\rightarrow \NN^{k+1}\\
			n &\mapsto (\sigma_1(n), \sigma_k(n))
		\end{split}
	\]
	primitive recursive, where $\sigma_1:\NN \rightarrow \NN$ and $\sigma_k: \NN \rightarrow \NN^k$. Then also $\sigma_k$ is primitive recursive then by inductive hypothesis there exists $g_k$ such that
	\[
		\forall n \exists m \in [n,g_k(n)] (\sigma_k(m) \preccurlyeq_k \sigma_k(m+1)). 
	\]
	Put $H(0,x)=x$ and, for any $n>0$:
	\[
		H(n,x) = g_k(H(n-1,x)+1)
	\]

	\[
		g:=  H(\sigma_1(n) + 2, n).
	\]
	We want to prove that 
	\[
		\forall n \exists m \in [n,g_k^{\sigma_1(n)+2}(n)](\sigma(m) \preccurlyeq_{k+1} \sigma(m+1)).
	\]

	If $\exists i < j \in [n, g_{k}^{\sigma_1(n)+2}(n)]$ such that $\sigma_1(i) < \sigma_1(j)$ then by Lemma \ref{lemma: banale} we obtain $\exists p \in [i,j-1]$  $\sigma_1(p) < \sigma_1(p+1)$. It follows that $\sigma(p) \preccurlyeq \sigma(p+1)$ in the lexicographic order and we are done.  Otherwise assume that
	\[
		\forall i, j \in [n, g_k^{\sigma_1(n)+2} (n)] (\sigma_1 (i) \geq \sigma_1(j)).
	\]
	We apply the inductive hypothesis over $\sigma_k$ and the disjoint intervals: 
		\[
	    ([n, g_k(n)], [g_k(n)+1, g_k(g_k(n)+1)], \dots).
     	\] 

	\[
	\begin{tikzpicture}
	\node (n)        at (-7.5,0) {\scriptsize $n$};
	\node (m1)       at (-6,0)   {\scriptsize$m_1$};
	\node (gk)       at (-4.5,0) {\scriptsize$g_k(n)$};
	\node (m2)       at (-3,0)   {\scriptsize$m_2$};
	\node (gkgk)     at (-1.5,0) {\scriptsize$g_k(g_k(n)+1)$};
	\node (m3)       at (0,0)    {\scriptsize$m_3$};
	\node (dots)     at (1.5,0)  {\scriptsize$\dots$};
	\node (ms)       at (3,0)    {\scriptsize$m_{\sigma_1(n)+2}$};
	\node (gks)      at (4.5,0)  {\scriptsize$g(n)$};

	\draw [line width=1](-7.5,-0.5) -- (4.5,-0.5);
	\draw [line width=1](n) -- (-7.5,-0.5);
	\draw [line width=0.5](m1) -- (-6,-0.5);
	\draw [line width=1](gk) -- (-4.5,-0.5);
	\draw [line width=0.5](m2) -- (-3,-0.5);
	\draw [line width=1](gkgk) -- (-1.5,-0.5);
	\draw [line width=0.5](m3) -- (0,-0.5);
	\draw [line width=0.5](ms) -- (3,-0.5);
	\draw [line width=1](gks) -- (4.5,-0.5);
	\end{tikzpicture}
	\]

	We obtain that there are some $m_1< m_2 < \dots < m_{\sigma_1(n)+2}$ such that $\sigma_k(m_i) \preccurlyeq_k \sigma_k(m_i+1)$. Moreover, by assumption on $[n, g_k^{\sigma_1(n)+2}(n)]$ we have $\sigma_1(n) \geq \sigma_1(m_1) \geq \sigma_1(m_2) \geq \dots \geq \sigma_1(m_{\sigma_1(n)+2})$. Then there exists $i\in [1,\sigma_1(n)+2]$ such that $\sigma_1(m_i) = \sigma_1(m_{i+1})$.
	Hence 
	\[
		\sigma_1(m_i) = \sigma_1(m_i+1)= \sigma_1(m_{i+1}).
	\]
	Since, by inductive hypothesis $\sigma_k(m_i) \preccurlyeq_k \sigma_k(m_i+1)$ it follows 
	\[
		\sigma(m_i) \preccurlyeq_{k+1} \sigma(m_i +1).
	\]
\end{proof}

\subsection{From $H(R_1\cup \dots\cup R_k)$ to Erd\H{o}s trees}

Now we want to define a primitive recursive function from $H(R_1 \cup \dots \cup R_k )$ to the Erd\H{o}s trees, in order to find a primitive recursive bound for the number of step of a sequence in $H(R_1 \cup \dots \cup R_k )$.

Let $\clx(x)= \ap{\ap{x},\ap{}}$ be the colored list including only $x$, and $*_i$ the junction of two colored lists with an edge of color $i$. Assume that for any node $x$ of $T$ we have $y _i x$ for some $i$. Let $c(y,x)=i$ if $i$ is the first integer in $[1,k]$ such that $yR_ix$. We denote with $T_i$ the $i$-th immediate subtree of $T$. Then we may recursively define:
\[
	h(T,y)= 
	\begin{cases}
		\clx(y) & T=\nt\\
		\clx(r)*_i h(T_i, y)  &  \mbox{ if } $r$ \mbox{ is the root of }T \mbox{,} \ i=c(r,y).
	\end{cases}
\]

Let define $E$ primitive recursive from $H(R_1\cup \dots\cup R_k)$ to the set of the  Erd\H{o}s trees as follows. 
\[
	E(\ap{x_1,\dots, x_n}) = 
	\begin{cases}
	\nil & n=0\\
	E(\ap{x_1,\dots, x_{n-1}}) \cup \bp{h(E(\ap{x_1,\dots, x_{n-1}}), x_n)}  & n>0;
	\end{cases}
\]

\begin{lemma}\label{lemma: aus2}
	If $L' \succ L$ then $E(L') \succT E(L)$.
\end{lemma}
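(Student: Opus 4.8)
The plan is to unfold the recursive definition of $E$ and reduce the lemma to a single structural fact about the auxiliary map $h$. Since $L' \succ L$ in $H(R_1 \cup \dots \cup R_k)$, we have $L = \ap{x_1, \dots, x_{n-1}}$ and $L' = \ap{x_1, \dots, x_n}$ for some $x_n$, and the defining clause of $E$ gives at once $E(L') = E(L) \cup \bp{\mu}$ where $\mu = h(E(L), x_n)$. Comparing with the definition of $\succT$ for $k$-ary trees, it therefore suffices to produce a colored list $\lambda \in E(L)$ and a color $c \in [1,k]$ with $\mu \succ_c \lambda$ such that no $\lambda' \in E(L)$ satisfies $\lambda' \succ_c \lambda$. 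Observe first that $h(E(L), x_n)$ is well defined: because $L' \in H(R_1 \cup \dots \cup R_k)$, for every $j < n$ there is some $i$ with $x_n R_i x_j$, hence $x_n$ is related by some $R_i$ to every node of $E(L)$, so the color selected at each step of the recursion exists.

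The heart of the argument is the following claim, which I would prove by induction on the finite tree $T$: for every Erd\H{o}s tree $T \in \ktr(\LC(R_1, \dots, R_k))$ and every $y$ related by some $R_i$ to each node of $T$, the colored list $h(T, y)$ has the shape $\lambda *_c \clx(y)$, where $\lambda \in T$, $c \in [1,k]$, and $T$ has no one-step extension of $\lambda$ of color $c$. Unwinding the recursion, $h$ descends along a unique path $r_0, r_1, \dots, r_m$ of genuine nodes of $T$, with $r_0$ the root and each $r_{j+1}$ the child of $r_j$ along the color determined by $y$, and it halts exactly at the first node $r_m$ whose subtree in that direction is the empty tree $\nt$, where the base clause $h(\nt, y) = \clx(y)$ appends $y$. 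Consequently $\lambda := \ap{r_0, \dots, r_m}$ (carrying the path colors) is a branch of $T$ and hence lies in $T$ by prefix-closure; $\mu = \lambda *_c \clx(y) \succ_c \lambda$ with $c$ the halting color; and, since the $c$-subtree at $r_m$ is empty, no $\lambda' \in T$ satisfies $\lambda' \succ_c \lambda$. The induction is organized so that the base case $T = \nt$ gives $\lambda = \nil$ and $\mu = \clx(y)$, while the step case $h(T, y) = \clx(r) *_c h(T_c, y)$ reduces the statement for $T$ to the same statement for the strictly smaller subtree $T_c$, supplied by the induction hypothesis.

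Instantiating the claim with $T = E(L)$ and $y = x_n$ yields precisely the triple $(\lambda, c, \mu)$ demanded by the definition of $\succT$, so $E(L') = E(L) \cup \bp{\mu} \succT E(L)$, as required. The single delicate point is the \emph{freshness} of the halting color $c$ at $\lambda$: it is this condition, guaranteed by the fact that $h$ stops only upon meeting an empty subtree, that makes $E(L) \cup \bp{\mu}$ a legitimate one-step extension and simultaneously preserves clause 3 (at most one extension per color) in the definition of a $k$-ary tree. I expect the main bookkeeping obstacle to be exactly the clean identification, inside the induction on $h$, of the halting node $r_m$ and its empty $c$-subtree; once that is pinned down the three conditions of $\succT$ are immediate. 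For completeness one also checks that $\mu$ remains a branch of an Erd\H{o}s tree: each edge color $c$ along $\mu$ is chosen so that $x_n R_c r_j$, and since the pre-existing branch of $E(L)$ is already $R_c$-decreasing below $r_j$, every node after $r_j$ on the new branch stays $R_c$-below $r_j$, i.e. $\mu \in \LC(R_1, \dots, R_k)$.
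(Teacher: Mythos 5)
Your proof is correct and follows essentially the same route as the paper's: unfold $E(L') = E(L) \cup \{h(E(L), x_n)\}$ and observe that $h$ appends $x_n$ as a new leaf at the end of a branch of $E(L)$, in a color direction not yet used at that branch. The paper's own proof is a two-line sketch of exactly this idea; you have simply made explicit the details it leaves implicit (the descent of $h$ along a unique path, the freshness of the halting color required by the definition of $\succT$, and the check that the new branch lies in $\LC(R_1,\dots,R_k)$).
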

\begin{proof}
If $L' \succ L$ then $L' = L * \ap{y}$ for some $y$. If follows that $E(L')= h(E(L),y)$, i.e. we add a leaf to a leaf of $E(L)$ (following the idea of Erd\H{o}s). Then $E(L') \succT E(L)$.
\end{proof}

\subsection{Main Theorem}

We define a primitive recursive increasing map $f^*: H(R_1\cup\dots\cup R_k) \rightarrow \omega^k+1$, from $R_1\cup\dots\cup R_k$-homogeneous sequences to ordinals, by $f^*(s) = \htree{E(s)}$.

\begin{lemma}\label{lemma: fstar}($f^*$ is increasing)
	If $L' \succ L$ then $f^*(L') < f^*(L)$.
\end{lemma}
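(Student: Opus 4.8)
The plan is to read off the result directly by composing the two transport lemmas already established, since $f^*$ was defined precisely as the composite $f^*(s) = \htree{E(s)} = \hTk{E(s)}{\omega\cdot k}$. First I would invoke Lemma \ref{lemma: aus2}: from the hypothesis $L' \succ L$, i.e. a one-step extension in $H(R_1 \cup \dots \cup R_k)$, we obtain $E(L') \succT E(L)$, so that the images are related by the one-step extension of $k$-ary trees. The point that makes this applicable is that $E$ maps into the set of Erd\H{o}s trees by construction, so both $E(L')$ and $E(L)$ are genuine Erd\H{o}s trees in $\ktr(\LC(R_1, \dots, R_k))$, which is exactly the standing hypothesis required by the height lemma.

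Next I would apply Lemma \ref{lemma: aus1}, which asserts that for Erd\H{o}s trees $T' \succT T$ implies $\hTk{T'}{\omega\cdot k} < \hTk{T}{\omega\cdot k}$. Instantiating with $T' = E(L')$ and $T = E(L)$ gives $\hTk{E(L')}{\omega\cdot k} < \hTk{E(L)}{\omega\cdot k}$. Unfolding the abbreviation $\htree{\cdot} = \hTk{\cdot}{\omega\cdot k}$ together with the definition $f^*(s) = \htree{E(s)}$, this inequality is literally the desired $f^*(L') < f^*(L)$.

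There is no genuine obstacle at this stage, because the substantive work has already been discharged upstream: the fact that $\alpha$ is a decreasing labelling forces $\succT$ to strictly decrease the ordinal height computed by $h_k$ (via Proposition \ref{Proposition: height}, packaged as Lemma \ref{lemma: aus1}), and the fact that $E$ respects one-step extensions is Lemma \ref{lemma: aus2}. The only thing worth verifying is that the two lemmas chain cleanly, which they do: the target relation $\succT$ produced by the first is exactly the source relation consumed by the second, and the codomain of $E$ lands inside the Erd\H{o}s trees for which $h_k(\cdot,\omega\cdot k)$ is known to be the ordinal height. Hence the lemma follows in one line of composition, and this is the form in which I would present it.
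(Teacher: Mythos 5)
Your proof is correct and is essentially identical to the paper's own argument: both apply Lemma \ref{lemma: aus2} to obtain $E(L') \succT E(L)$ and then Lemma \ref{lemma: aus1} to conclude $f^*(L') = \htree{E(L')} < \htree{E(L)} = f^*(L)$. Nothing further is needed.
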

\begin{proof}
	By applying Lemma \ref{lemma: aus2} $E(L') \succT E(L)$. By Lemma \ref{lemma: aus1} 
	\[
		f^*(L') = \htree{E(L')} < \htree{E(L)}= f^*(L).
	\]
\end{proof}

Let $P$ be a program. We define a reduction relation $R$ as in Podelski and Rybalchenko paper \cite{Podelski}. Let $t$  be a computation which behaves like $R$ until it reaches a final state and then it repeats this state, i.e. if $x$ is a final state $t(x)=x$.

\begin{theorem}\label{theorem:main}
	Assume that $P$ is such that $ R^+ \cap (\acc \times \acc) = R_1 \cup \dots \cup R_k$, where
	\begin{enumerate}
	\item the complement of the set of the final states of $R$ ($\stati \setminus F$) is a primitive recursive set;
	\item $R$ is the graph of a primitive recursive function $f: \stati \setminus F \rightarrow \stati$: i.e.
	\[
		R=\bp{(x,f(x)) : x \in \stati\setminus F}.
	\]
	\item $R_1, \dots, R_k$ are primitive recursive relations and have height $\omega$.
	\end{enumerate}
	Then there exists $g': \stati \rightarrow \NN$ such that $t^{g'(s)}(s)= t^{g'(s)+1}(s)$.
\end{theorem}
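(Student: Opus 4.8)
The plan is to convert the ordinal bound on $H(R_1 \cup \dots \cup R_k)$ into an explicit primitive recursive step bound, using the fact that a terminating run of $t$ gives rise to an $(R_1 \cup \dots \cup R_k)$-homogeneous sequence on which the ordinal-valued map $f^*$ is strictly decreasing. The key observation is that the run of $t$ on an initial state $s$ is itself primitive recursive in the step number: define $\tau(n) = t^n(s)$, which is primitive recursive in $n$ (and in $s$) by hypotheses (1) and (2), since $t$ is the graph of $f$ extended by the identity on final states, and $\stati \setminus F$ is decidable. The sequence of visited states $\langle \tau(0), \dots, \tau(n) \rangle$, restricted to the accessible states and as long as no final state has been reached, is $(R^+ \cap (\acc \times \acc))$-homogeneous, hence lies in $H(R_1 \cup \dots \cup R_k)$.

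**The core argument.**

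First I would compose $f^*$ with the run to obtain $\sigma(n) = f^*(E(\langle \tau(0), \dots, \tau(n)\rangle))$, which takes values in $\omega^k + 1$. Since each relation $R_i$ has height $\omega$, the ordinals appearing as values of $\htree{\cdot}$ below $\omega^k$ are (by the analysis of the preceding section) exactly the ordinals $< \omega^k$, which I would represent by their Cantor normal form as elements of $\NN^k$. Under this representation the ordinal ordering on $[0,\omega^k)$ is the lexicographic order $\preccurlyeq_k$ on $\NN^k$ (most significant coefficient first), so the statement ``$\sigma$ is strictly decreasing as long as the run has not terminated'' becomes ``the $\NN^k$-valued sequence $\tilde\sigma(n)$ encoding $\sigma(n)$ is strictly $\preccurlyeq_k$-decreasing while $\tau(n) \notin F$''. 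By Lemma~\ref{lemma: fstar}, each genuine reduction step $L' \succ L$ forces $f^*(L') < f^*(L)$, so $\tilde\sigma(n+1) \prec_k \tilde\sigma(n)$ whenever step $n$ is not yet final. The crucial input is Lemma~\ref{ausiliario}: applied to the primitive recursive $\tilde\sigma : \NN \to \NN^k$ it yields a primitive recursive $g$ with $\forall n\, \exists m \in [n, g(n)]\, (\tilde\sigma(m) \preccurlyeq_k \tilde\sigma(m+1))$. Instantiating at $n = 0$ gives some $m \le g(0)$ with $\tilde\sigma(m) \preccurlyeq_k \tilde\sigma(m+1)$, i.e. a \emph{non-decrease}. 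But a non-decrease contradicts strict descent at a non-final step; hence that $m$ must correspond to an already-terminated configuration, i.e. $\tau(m) \in F$, which means $t^m(s) = t^{m+1}(s)$. Setting $g'(s) = g(0)$ (with $g$ depending primitive-recursively on $s$ through $\tau$) produces the required $g'$.

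**The main obstacle.**

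The technical heart, and the step I expect to require the most care, is establishing that the fixed-point detected by Lemma~\ref{ausiliario} genuinely signals termination rather than being an artifact of the encoding. Concretely, I must verify two things in tandem: that the lexicographic encoding of ordinals below $\omega^k$ into $\NN^k$ is monotone (so that $f^*$ decreasing in the ordinal sense becomes $\preccurlyeq_k$-decreasing in the encoded sense), and that the \emph{only} way $\tilde\sigma$ can fail to strictly decrease from step $m$ to $m+1$ is that $\tau(m)$ is already final — this uses the fact that while $t$ is making real reductions, $\langle\tau(0),\dots,\tau(m+1)\rangle$ is a genuine one-step extension of $\langle\tau(0),\dots,\tau(m)\rangle$ in $H(R_1\cup\dots\cup R_k)$, so Lemma~\ref{lemma: fstar} applies and forces strict descent. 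Once the padding-by-identity behaviour of $t$ on $F$ is cleanly separated from the strictly-descending behaviour on $\stati \setminus F$, the primitive recursiveness of $g'$ follows by tracking primitive recursiveness through $\tau$, $E$, $\htree{\cdot}$, the encoding, and finally the $g$ produced by Lemma~\ref{ausiliario}.
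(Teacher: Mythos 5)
Your proposal is correct and follows essentially the same route as the paper: compose the primitive recursive run $n \mapsto t^n(s)$ with $f^*$ to get a primitive recursive sequence in $\NN^k$ (ordinals below $\omega^k$ in Cantor normal form under the lexicographic order), apply Lemma~\ref{ausiliario} at $n=0$ to locate a non-decrease $m \le g(0)$, and conclude via Lemma~\ref{lemma: fstar} that $t^m(s)$ must be final. The only quibble is notational: you write $f^*(E(\cdots))$ although $f^*$ is already defined as $\htree{E(\cdot)}$, so the $E$ is applied twice; otherwise the argument matches the paper's.
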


\begin{proof}

	Observe that thanks to hypothesis 1 and 2 we have that $t$ is primitive recursive, while thanks to the third one $f^*$ is primitive recursive.
	
	Let $\phi(x):= f^*(\ap{s_0,t^{1}(s_0) ,\dots, t^{x}(s_0)})$. Then $\phi: \NN \rightarrow \NN^k$, since the input list for $f^*$ cannot be the empty list. Moreover $\phi(x)$ is primitive recursive. In fact let
\[
	\theta(x)= \begin{cases} 
	\ap{s_0} &\mbox{if } x=0;\\
	\theta(x-1)*t^x(s_0) &\mbox{ otherwise }.
	\end{cases}
\]
It follows that $\phi$ is a composition of primitive recursive function, so is primitive recursive:

\[
	\phi(x)= f^*(\theta(x)).
\]

Thanks to Lemma \ref{ausiliario} there exists $g$ primitive recursive
\[
	\forall n \exists m \in [n, g(n)] (\phi(m) \preccurlyeq_k \phi(m+1)).	
\]
Put $n=0$. Then there exists $m \in [0, g(0)]$  such that

\[ f^*(\theta(m)) = \phi(m) \preccurlyeq_k \phi(m+1) = f^*(\theta(m+1)).\]

Observe that, since  $R^+ \cap (\acc \times \acc) = R_1 \cup \dots \cup R_k$ and $R^+ \cap (\acc \times \acc)$ is transitive, for any $m \in \NN$ if $t^{m-1}(s_0)$ is not a final state then $\theta(m) \in H(R_1\cup \dots \cup R_k)$, and if $t^{m}(s_0)$ is not a final state then $\theta(m+1) \succc \theta(m)$.

By Lemma \ref{lemma: fstar} we ontain that $t^{m}(s_0)$ is not a final state, then \[f^*(\theta(m)) \succ_k f^*(\theta(m+1)),\] contradicting \[f^*(\theta(m)) \preccurlyeq_k f^*(\theta(m+1)).\] Thus, $t^{m}(s_0)$ is a final state.
\end{proof}

\section{Vice versa} 
In this section we will prove the vice versa of the theorem \ref{theorem:main}: if $f$ is a primitive recursive function then there exists a program $P$ which evaluates $f$ such that $P$ has a disjunctively well-founded transition invariant $T$ composed only by primitive recursive relations of height at most $\omega$. For short we say that a transition invariant is primitive recursive and has height $\omega$ if it is composed only by primitive recursive relations with height $\omega$.  

In order to prove this result we will use the following notation. Given a program $P$ of the form

\begin{verbatim}
 int f(int x1, ..., int xn)
 {  int v1, ..., vm;
    CODE 
    return r;
 }
\end{verbatim}

we define the code $\ap{f(x_1,\dots, x_n),y}$ as follows:

\begin{verbatim}
 int z1=x1;
 ...
 int zn=xn;
 CODE[x1/z1,..., xn/zn];
 y=r;
\end{verbatim}

Now we may prove the main result of this section.
\begin{theorem}
Let $f$ be a primitive recursive function, then there exists a program $P$  which evaluates $f$ such that:
\begin{itemize}
\item its transition relation is a graph of a primitive recursive function restricted to a primitive recursive domain;
\item $P$ has a primitive recursive transition invariant disjunctively well-founded of height $\omega$.
\end{itemize} 
\end{theorem}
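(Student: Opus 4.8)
The plan is to prove the converse by structural induction on the construction of primitive recursive functions, building for each function a {\sf while-if} program whose transition relation and transition invariant meet the two requirements. The key insight is that a primitive recursive function, when unfolded into a {\sf while-if} program, can always be implemented so that its computation is controlled by a single counter that strictly decreases at every reduction step, and decreasing counters give exactly relations of height $\omega$.

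First I would fix the base cases: the zero function, successor, and projections are computed by straight-line code with no loops, so their transition relations are graphs of primitive recursive functions on primitive recursive domains, and the (trivial, or single-step) transition invariant is vacuously of height $\omega$. The interesting case is \textbf{primitive recursion} itself: given $g$ and $h$ already implemented by such programs, the function $f(0,\vec x)=g(\vec x)$, $f(n+1,\vec x)=h(n,f(n,\vec x),\vec x)$ is implemented by a single {\sf while} loop with a loop counter running from $n$ down to $0$ (or an auxiliary counter $i$ running from $0$ up to $n$, together with the bound $n$). At this point I would introduce, for the composite program, an extra \emph{step counter} variable that is initialised to a primitive recursive bound on the total number of reduction steps and is decremented at every single step of the execution. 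Since the whole computation terminates and each inner subprogram already has a primitive recursive step bound by the induction hypothesis, such an overall bound is itself primitive recursive, so the step counter is well-defined.

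With the step counter in place, the transition invariant is easy to describe: take the single relation $R_1 = \{(x,x') : \text{step-counter}(x') < \text{step-counter}(x)\}$. This relation is primitive recursive (reading off one coordinate of the state and comparing) and has height exactly $\omega$, since the step counter is a natural number that strictly decreases. Because $R^+ \cap (\acc \times \acc) \subseteq R_1$ — every reachable reduction strictly decreases the step counter — the single-element family $\{R_1\}$ is a disjunctively well-founded transition invariant of height $\omega$. The transition relation itself is the graph of the primitive recursive one-step transition function restricted to the primitive recursive set of non-final (non-halted) states, which gives the first requirement.

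The main obstacle I expect is \textbf{composition} and the bookkeeping needed to make the step counter genuinely primitive recursive across nested subroutines. When $f = g \circ (h_1,\dots,h_m)$, the program must run each $h_j$, store its output, and then run $g$; the total number of steps is a primitive recursive function of the total steps of the parts, but I have to argue this cleanly, ideally by maintaining as an invariant of the induction that each constructed program comes equipped with a primitive recursive function $B(\vec x)$ bounding its step count on input $\vec x$. Carrying this bound through every constructor — and ensuring the concatenation convention $\ap{f(x_1,\dots,x_n),y}$ used to glue subprograms preserves both the primitive recursiveness of the transition function and the single-decreasing-counter invariant — is the delicate part; once the step counter is available the height-$\omega$ and primitive-recursiveness conditions on the invariant follow immediately.
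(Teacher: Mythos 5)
Your overall strategy is sound but it is a genuinely different route from the paper's. The paper never computes a global run-time bound: it builds the transition invariant \emph{compositionally}, as a finite union of the invariants obtained from the induction hypotheses (lifted to the larger state space) together with one or two extra ``phase'' relations --- for composition, a relation tracking a counter $a$ that records which subprogram $g_1,\dots,g_k,h$ is currently running (a relation of finite height $k+2$), and for primitive recursion, a relation tracking the loop counter $z<y$ (height $\omega$) plus the within-iteration invariant of $g$. Disjunctive well-foundedness and height $\omega$ then follow because a union of height-$\omega$ well-founded relations is still a legitimate disjunctively well-founded invariant; no single relation needs to contain all of $R^+ \cap (\acc \times \acc)$. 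Your approach instead collapses everything into one ranking relation by first proving, as an auxiliary induction invariant, a primitive recursive bound $B(\vec x)$ on the number of steps. That buys a stronger byproduct (an explicit primitive recursive complexity bound, which is essentially the content of the paper's Theorem 3.6 in the other direction) at the cost of extra bookkeeping the paper simply does not need.

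The one place where your version has a genuine gap is the sentence ``decremented at every single step of the execution.'' In a {\sf while-if} language you cannot literally fuse a decrement into every transition: if you insert an instruction \texttt{c=c-1} after each original instruction, then the decrement is itself a step, and the single-step transition performing the original instruction does \emph{not} decrease \texttt{c}; hence your $R_1$ fails to contain $R^+\cap(\acc\times\acc)$ and is not a transition invariant. The clean repair is to drop the syntactic counter altogether and define the rank \emph{semantically}: let $\mathrm{rem}(x)$ be the number of remaining steps from state $x$, which is obtained by bounded minimization once you have carried a primitive recursive step bound (as a function of the \emph{current state}, not just of the initial input) through the induction; then $R_1=\{(x,x'): \mathrm{rem}(x')<\mathrm{rem}(x)\}$ contains every single step and hence all of $R^+\cap(\acc\times\acc)$. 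This is fixable, but it is exactly the point you labelled ``the delicate part'' without resolving, and it is where a naive implementation of your plan breaks. The paper's union-based construction sidesteps the issue entirely.
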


\begin{proof}
By induction on primitive recursive functions.
\begin{enumerate}
\item {CONSTANT FUNCTION 0.} If $\forall x.f(x)=0$
\begin{verbatim}
 int f(int x){
  return 0;
 }
\end{verbatim}
Then $R=\emptyset$ and so one transition invariant is $T = \emptyset$.
\item {SUCCESSOR FUNCTION.} If $\forall x.f(x)=x+1$.
\begin{verbatim}
 int f(int x){
  return x+1;
 }
\end{verbatim}
As above $R=\emptyset$ and so one transition invariant is $T = \emptyset$.
\item {PROJECTION FUNCTION.} Let $i \in n$. If 
\[
	\forall x_1, \dots, x_n.f(x_1, \dots, x_n)=x_i.
\]
\begin{verbatim}
 int f(int x1, ... int xn){
  return xi;
 }
\end{verbatim}
As above $R=\emptyset$ and so one transition invariant is $T = \emptyset$.
\item {COMPOSITION.} Let 
\[
	\forall x_1, \dots, x_n. f(x_1, \dots x_n)= h(g_1(x_1, \dots, x_n), \dots g_k(x_1, \dots, x_n),
\]
	
where for $h, \ g_1, \dots,\ g_k$ there exists a program $P_h, \ P_{g_1}, \dots \ P_{g_k}$ with one transition invariant disjunctively well-founded $T_h, \ T_{g_1}, \dots \ T_{g_k}$ of height $\omega$ by induction hypothesis. Then we define $P$ as follows
\begin{verbatim}
 int f(int x1, ... int xn){
  int a, y1, ..., yk, res,... ; 
  a=1;
  < g1(x1, ... , xn), y1>
  ...
  a=a+1;
  < gk(x1, ... , xn), yk>
  a=a+1;
  < h(y1, ... , yk), res>
  return res;
 }
\end{verbatim}
where $a, y_1, \dots, y_n, res$ and the variables in $\ap{g_i(x_1, \dots , x_n), y_i}$ and in $\ap{h(x_1, \dots , x_n), res}$ are fresh. Observe that for plainness it is better to declare all the variables used  in $P$ in the first instruction. In this way each state of $P$ is a sequence of values of all the variables which appear in $P$. 
Let 
\[
	T=\bp{(\ap{a,\bar{s}}, \ap{a',\bar{s}'}) : a < k+1, a < a' < k+2 }
\]
and for any $i \in [1,k]$
\[
	T_{g_i}^*= \bp{(\ap{i,\bar{s}}, \ap{i,\bar{s}'}) : \bar{s} T_{g_i}\bar{s}' }
\]
and 
\[
	T_{h}^*= \bp{(\ap{k+1,\bar{s}}, \ap{k+1,\bar{s}'}) : \bar{s} T_{h}\bar{s}' }.
\]

Hence $T_f= T \cup T_{g_1}^* \cup \dots \cup T_{g_k}^* \cup T_h^*$ is a transition invariant disjunctively well-founded of height $\omega$ for $P$. In fact if 
$(\bar{s},\bar{s}') \in R^+\cap(\acc \times \acc)$, we have one of the following possibilities:
\begin{itemize}
\item They are two states in the same functions $g_i$ for some $i$ or $h$. Then $(\bar{s},\bar{s}') \in T_{g_i}$ or $(\bar{s},\bar{s}') \in T_{h}$ by inductive hypothesis. 
\item They are two states in two different functions then the variable $a$ in the first state has a smaller value then the second one. So $(\bar{s},\bar{s}') \in T$. 
\end{itemize}
This proves that it is a transition invariant. It has height $\omega$ since $T_{g_1}, \dots, T_{g_k}, T_h$ have height $\omega$ and $T$ has height $k+2$.   Moreover it is disjunctively well-founded since $T_{g_1}, \dots, T_{g_k}, T_h$ are and $T$ is well-founded.

Moreover the transition relation of $P$ is a graph of a primitive recursive function restricted to a primitive recursive domain; this is an exercise in complexity theory.

\item {PRIMITIVE RECURSION.} Let 
\[
\begin{cases}
	&f(0, x_1, \dots, x_k)= h(x_1, \dots, x_k),\\
	&f(S(y),x_1, \dots, x_k)=g(y, f(y,x_1, \dots, x_k),x_1, \dots x_k).
\end{cases}
\]
where for $h$ and $g$ there exist two programs $P_h, P_{g}$ with a transition invariant disjunctively well-founded $T_h, \ T_{g}$ of height $\omega$ by induction hypothesis. 
Let $P_g$ be as follows:
\begin{verbatim}
int g(int y, int q, int x1, ... , int xn)
{
 CODE
 return r;
}
\end{verbatim}
 
and let $P_h$ be:

\begin{verbatim}
int h(int x1, ... , int xn)
{
 CODE
 return r;
}
\end{verbatim}

Then we define $P$ as:

\begin{verbatim}
 int f(int y, int x1, ... int xk){
  int z, z1, ..., zk, ...; 
  z=0;
  < h(x1, ... , xk), w>
  z1=x1;
  ...
  zk=xk;
  while (z < y)
  { z=z+1; 
    CODE[ y/z, q/w, x1/z1,..., xk/zk];
    w=r;
  }
  return w;
 }
\end{verbatim}
where $z, z1, \dots, z_k,w$ and the variable of $\ap{h(x_1,\dots , x_k), w}$ and of
	CODE$[y/z, q/w, x_1/z_1,\dots, x_k/z_k]$ are fresh variables. Observe that, again,  for plainness it is better to declare all the variables used in the first instruction. 
Let
\[
	T_{h}^*= \bp{(\ap{0,\bar{s}}, \ap{0,\bar{s}'}) : \bar{s} T_{h}\bar{s}' }.
\]
In this case the transition invariant is
\[
\begin{split} 
	T_h^* \cup &\bp{ ((z, y, \bar{s}),(z,y, \bar{s}')) : (\bar{s}, \bar{s}') \in T_g, z < y, y \in \omega} \cup\\
	&\cup \bp{((z,y, \bar{s}), (z+1,y, \bar{s}') : y \in \omega, z < y )}
\end{split}
\]
Let us call the second one $T_1$ and the third one $T_2$. It is a transition invariant since if $(\bar{s},\bar{s}')\in R^+\cap(\acc \times \acc)$ then we have one of the following possibilities:
\begin{itemize}
\item the two states are states of $h$, then $(\bar{s},\bar{s}')\in T_h$;
\item the two states are states in same round of the new while then  $(\bar{s},\bar{s}')\in T_1$;
\item the two states are states in two different rounds of the new while then $(\bar{s},\bar{s}')\in T_2$.
\end{itemize}
Then it is a transition invariant. It has height $\omega$ since $T_h$ is,  $T_1$ is the union of relations of height $\omega$ since $T_g$ is, and $T_2$ has height $\omega$. Moreover it is disjunctively well-founded since $T_h$ and $T_g$ are and $T_2$ is well-founded.

Moreover, the transition relation of $P$ is a graph of a primitive recursive function restricted to a primitive recursive domain; as above, this is an exercise in complexity theory.
\end{enumerate}
\end{proof}

\bibliographystyle{plain}
\bibliography{bibliografia}
\end{document}